\documentclass[11pt]{article}
\usepackage[round,sort&compress]{natbib} 
\usepackage[citecolor=blue]{hyperref}
\usepackage[english]{babel}
\usepackage[a4paper, total={6in, 8in}]{geometry}
\usepackage{url}
\usepackage{fixmath}
\usepackage{graphicx}
\usepackage{authblk}

\usepackage{amsmath,amsthm,amsfonts}
\newtheorem{theorem}{Theorem}
\newtheorem{definition}{Definition}

\newtheorem{lemma}[theorem]{Lemma}
\newtheorem{assumption}{Assumption}

\newtheorem{remark}{Remark}

\def\keywords{\vspace{.5em}
{\textbf{Keywords}:\,\relax%
}}

\title{Multivariate mixed models with model-free random effects}

\author[1]{Angela Andreella}
\author[2]{Livio Finos}

\affil[1]{Department of Economics, Ca' Foscari University of Venice, Italy}
\affil[2]{Department of Statistical Sciences, University of Padova, Italy}
\date{}

\begin{document}

\maketitle

\begin{abstract}
Linear mixed models are widely used to analyze non-independent data, but inference for fixed effects can be unreliable under misspecification of the random-effects distribution, inaccurate Fisher information estimation, or convergence failures, leading to a lack of control over false positives. These difficulties are amplified in multivariate settings, where within-cluster and between-response dependence must be modeled jointly. We propose a testing procedure for fixed effects in multivariate linear mixed models that avoids Fisher information estimation and does not require correct specification of the random-effects distribution by combining score statistics with clusterwise sign-flipping transformations. Our method accommodates both forms of dependence and yields asymptotically valid inference under weak distributional assumptions on the data-generating process.
\end{abstract}
\keywords{
Linear mixed model, Robustness, Score test, Sign flipping, Variance misspecification.
}

\maketitle

\section{Introduction}
Linear mixed models (LMMs) \citep{pinheiro2006mixed} are widely used in many applied fields, e.g., clinical trials \citep{vangeneugden2004applying}, education \citep{dupuis2013application}, ecology \citep{zuur2009mixed}, small area estimation \citep{rao2015small}, and meta-analysis \citep{field2010meta}. This popularity derives from their ability to accommodate diverse data dependence structures by incorporating random effects within a regression framework. Despite this flexibility, LMMs have limitations, i.e., sensitivity to variance misspecification and numerical difficulties in optimization \citep{meng1998fast}. Moreover, their capacity to represent complex dependence data structures relies on strong parametric assumptions, rendering inferential validity highly dependent on correct model specification. The retraction of the paper by \cite{fisher2015retracted}, while perhaps anecdotal, illustrates how inadequate modeling of the random-effects structure can lead to substantially different scientific conclusions. In practice, despite the availability of general guidelines for random-effects specification \citep{bates2015parsimonious, matuschek2017balancing, barr2013random}, modeling the random component of a mixed model often remains an art form in which the researcher's intuition plays a crucial role. Methods that are robust, or even insensitive, to such modeling choices while retaining statistical efficiency and power are therefore highly desirable.

%A complementary line of work aims to mitigate invalid standard errors arising from variance and covariance misspecification in LMMs by replacing the model-based covariance matrix with sandwich Cluster-Robust (CR) variance estimators \citep{bell2002bias}. %These estimators are constructed from the empirical estimating functions and the observed (or expected) Hessian. 
%In particular, the bias-reduced linearization estimator CR2 proposed by \cite{bell2002bias} adjusts the cluster-robust variance so that it is exactly unbiased under an analyst-specified working error model, while remaining asymptotically consistent under general (possibly misspecified) variance structures. %Simulation evidence in the robust-variance-estimation literature suggests that CR2-type adjustments paired with Satterthwaite-style degrees of freedom can remain reliable even under substantial misspecification of the working model \citep{tipton2015small,tiptonpustejovsky2015}.
%Despite these advances, the variety of available corrections and their sometimes ad hoc nature make it nontrivial to select an appropriate method in practice \citep{cameron2015practitioner,imbens2016robust,pustejovsky2018small}. In addition, it is worth noting that these corrections are primarily designed for inference on regression fixed-effect parameters under clustered dependence in univariate settings.

These challenges become particularly critical in multivariate settings, where multiple outcomes are measured repeatedly on the same experimental units. Such data are increasingly common in modern applications and require models that simultaneously capture within-cluster and between-responses dependence. Several statistical approaches have been developed to address these complexities, some of which are briefly summarized below. For comprehensive reviews, see, for example, \cite{bandyopadhyay2011review, verbeke2014analysis}.

A common simple strategy is to fit separate models for each outcome; this strategy ignores essential correlation structure and can lead to invalid inference \citep{bonat2016multivariate}. In addition, conducting multiple regressions in parallel introduces multiplicity issues, and traditional correction procedures, such as the Bonferroni-Holm method \citep{holm1979simple}, tend to be conservative when outcomes are correlated, a situation common in high-dimensional settings \citep{goeman2014multiple}.

%A common but simplistic strategy is to fit separate models for each outcome, implicitly assuming independence across responses. However, this approach overlooks crucial correlation information, potentially leading to inferential errors \citep{bonat2016multivariate}. Moreover, running multiple regression models in parallel raises the issue of multiplicity: classical corrections such as the Bonferroni-Holm procedure may become overly conservative in the presence of correlated outcomes, a situation frequently encountered in high-dimensional applications \citep{goeman2014multiple}.

Some of the approaches proposed to capture dependence among multiple outcomes include conditional models, latent-variable formulations, and copula-based methods \citep{fieuws2004joint}. In conditional models, the outcomes are treated asymmetrically, and then different factorizations may lead to different results, yielding models that are difficult to interpret \citep{zeger1991feedback}. Latent-variable models, in contrast, require assuming that an unobserved construct drives the observed responses, whereas the complexity of copula-based models increases rapidly with the number of outcomes, often resulting in convergence issues or overfitting. An elegant alternative is the pairwise likelihood approach of \cite{fieuws2006pairwise}, which reduces computational burden by modeling all bivariate subsets. Yet the number of pairs grows quadratically with dimension, and valid inference requires additional pseudo-likelihood adjustments. Moreover, when embedded in mixed-model frameworks, pairwise likelihood inherits the sensitivity of LMMs to distributional assumptions. Dependence across outcomes can also be represented through shared random effects \citep{fieuws2004joint}. Still, maximum-likelihood estimation is typically feasible only in low dimensions (two or three outcomes) or under strong a priori constraints on the dependence structure.

In summary, existing approaches to multivariate non-independent data modeling rely on specific assumptions regarding within-cluster and between-outcome dependence. %Consequently, model misspecification and convergence difficulties remain key challenges that can undermine inferential validity. 
Reliable statistical testing depends not only on how well the assumed model captures the true data-generating mechanism, but also on the stability and accuracy of the numerical procedures used to estimate the covariance structure.

%In summary, all of the approaches discussed above depend on specific assumptions about within-cluster variability and the association between outcomes in the multivariate setting. Model misspecification and convergence issues remain key challenges that can undermine the validity of inference. The reliability of statistical testing depends not only on how well the assumed model reflects the true data-generating process, but also on the stability and accuracy of the numerical procedures used to estimate the covariance structure.

To address these challenges, nonparametric approaches, such as resampling-based tests, have emerged as promising alternatives, offering robust and reliable inference across a wide range of hypothesis-testing problems \citep{pesarin2001multivariate, winkler2014permutation, chung2013exact}. Within the mixed-model framework, \cite{kherad2010exact} and \cite{lee2012permutation} proposed tests based on permuted residuals. The former, however, is limited to specific random-effects structures and cannot accommodate variability in the covariate of interest, whereas the latter lacks full mathematical justification and is computationally burdensome because the model must be refitted at each resampling step; moreover, no extension to multivariate outcomes has been developed. \cite{basso2012exact} introduced an approximate permutation test for generalized LMMs that can be extended to multivariate settings. However, its control of the global type I error critically depends on the accuracy with which the covariance of the predictors and random effects is estimated, thereby making it practically infeasible in nontrivial dimensions. These limitations highlight the need for developing an inferential procedure that remains valid under complex dependence structures, avoiding strong parametric assumptions.

We then propose \emph{clip} (clusterwise sign-flipping), a conditional resampling-based test for fixed effects in clustered designs with multivariate responses. The procedure accommodates within-cluster and between-outcome dependence without requiring a random-effects model. By handling dependence nonparametrically, it achieves asymptotic type I error control under variance misspecification, without relying on correct specification of either the random-effects structure or the working covariance. At the same time, a suitable working covariance, such as one induced by a plausible random-effects structure, can improve efficiency and power. Even under substantial misspecification, the method remains competitive in power relative to a classical LMM. The test is score-based \citep{hemerik2020robust, de2022inference}, and approximates the null distribution through clusterwise sign-flipping transformations applied independently across clusters and synchronously across outcomes. The proposed method thus provides robust inference for regression coefficients in clustered designs with multivariate
responses. %Dependence across multiple outcomes is preserved by multiplying each vector of score contributions by the same random sign as usual in permutation testing theory \citep{pesarin2001multivariate}. 
%The proposed method enables accurate and robust inference for regression coefficients in clustered designs with multivariate responses. More importantly, the proposed approach does not require the specification of the random-effects structure, nor univariate, nor multivariate: misspecification of the random-effects correlation affects only the power of the procedure, not its type I error. Notably, even under substantial misspecification, the method still exhibits good power properties when compared with a classical LMM.

Given the practical importance of testing fixed effects in multivariate clustered data, we implement the proposed method in the \texttt{R} package \texttt{remmm}, available at \url{https://github.com/angeella/remmm}. 

The paper is organized as follows. Section \ref{sec:setting} introduces the data-generating process, the underlying assumptions, and the methodological background required for the proposed approach, which is presented in Section \ref{sec:clustertest}. Section \ref{sec:simulations} evaluates the method through simulation studies, and Section \ref{sec:EEG} illustrates its application to electroencephalography (EEG) data. Finally, Section \ref{sec:discussion} provides concluding remarks and outlines directions for future research.

\section{Framework and assumptions}\label{sec:setting}

Let $\boldsymbol{y}_{ij}=(y_{ij1},\dots,y_{ijM})^\top \in \mathbb{R}^M$ denote the $M$ outcomes observed at occasion $i=1,\dots,n_j$ in cluster $j=1,\dots,N$. For each outcome $m\in\{1,\dots,M\}$, let $\boldsymbol{y}_{jm}\in\mathbb{R}^{n_j}$ collect the observations $\{y_{ijm}: i=1,\dots,n_j\}$, and consider the linear mixed representation
\begin{equation}\label{eq:model}
\boldsymbol{y}_{jm}
=
\boldsymbol{x}_{jm}\beta_m
+
Z_{jm}\boldsymbol{\gamma}_m
+
Q_{jm}\boldsymbol{b}_{jm}
+
K_{jm}\boldsymbol{g}_{jm}
+
\boldsymbol{\epsilon}_{jm}.
\end{equation}
Here $\boldsymbol{x}_{jm}\in\mathbb{R}^{n_j}$ is the covariate of interest, $Z_{jm}\in\mathbb{R}^{n_j\times q_m}$ collects nuisance covariates, and $\beta_m\in\mathbb{R}$ and $\boldsymbol{\gamma}_m\in\mathbb{R}^{q_m}$ are fixed-effect parameters common across clusters. The matrices $Q_{jm}$ and $K_{jm}$ encode the random-effects structure associated with the covariate of interest and with the nuisance covariates, respectively. The random effects $\boldsymbol{b}_{jm}$ and $\boldsymbol{g}_{jm}$ are assumed to have mean zero and finite second moments, but no parametric assumptions are imposed on their distribution. %Although \eqref{eq:model} provides a useful mixed-effects representation,  asymptotic type I error control for fixed-effects testing using the \emph{clip} method proposed here does not rely on correct specification of the random-effects distribution or covariance structure; see Remarks~\ref{remark1} and~\ref{remark2}. %The random effects $\boldsymbol{b}_{jm}$ and $\boldsymbol{g}_{jm}$ are assumed to have mean zero and finite second moments, but no parametric assumptions are imposed on their distribution. Our inferential framework is model-free with respect to the random-effects specification, in the sense that valid fixed-effects inference does not rely on correct specification of the random-effects structure; see Remarks~\ref{remark1} and~\ref{remark2}.

For outcome $m$ in cluster $j$, let $\boldsymbol{\epsilon}_{jm}\in\mathbb{R}^{n_j}$ denote the error vector, and write $\boldsymbol{\epsilon}_j=(\boldsymbol{\epsilon}_{j1}^\top,\dots,\boldsymbol{\epsilon}_{jM}^\top)^\top \in \mathbb{R}^{n_jM}$. We assume $\mathbb{E}(\boldsymbol{\epsilon}_j)=\boldsymbol{0}$ and $\mathrm{Cov}(\boldsymbol{\epsilon}_j)=\Sigma_{\boldsymbol{\epsilon}_j}$. A common working specification is
$\mathrm{Cov}(\boldsymbol{\epsilon}_j)=\sigma_j^2(I_{n_j}\otimes \Sigma_\epsilon)$, where $\sigma_j^2$ allows for clusterwise heteroscedasticity and $\Sigma_\epsilon\in\mathbb{R}^{M\times M}$ captures cross-outcome dependence. By contrast, fitting separate univariate mixed models amounts to ignoring such dependence and, under this working specification, corresponds to taking $\Sigma_\epsilon$ diagonal. Clusters are assumed independent.

We aim to test the intersection null hypothesis
\begin{equation}\label{eq:null}
H_0:\ \bigcap_{l\in L} H_{0l},
\qquad
H_{0l}:\ \beta_l=\beta_l^0,
\end{equation}
for a given index set $L\subseteq\{1,\dots,M\}$. The choice $L=\{1,\dots, M\}$ yields a global multivariate test, whereas $|L|=1$ corresponds to the univariate case. All results below are valid for any non-empty subset $L\subseteq\{1,\dots, M\}$. For simplicity, we state the results for a generic set $L$.

Although \eqref{eq:model} provides a useful mixed-effects representation, our inferential framework does not require the correct specification of the random-effects distribution, of the random-effects covariance structure, or, more generally, of the within-cluster covariance. Accordingly, only the conditional mean requires correct specification. In this sense, the random part of \eqref{eq:model} serves as a working model for efficiency, rather than a component that must be correctly specified to obtain asymptotic type I error control. This perspective is in the spirit of generalized estimating equations and robust score-based inference \citep{liang1986longitudinal,white1980heteroskedasticity,hemerik2020robust}.

For each outcome $l\in L$ and cluster $j$, let
$V_{jl}(\Theta_l):=\mathrm{Var}(\boldsymbol{y}_{jl}\mid \boldsymbol{x}_{jl},Z_{jl})$
denote the marginal covariance of $\boldsymbol{y}_{jl}$, where $\Theta_l$ generically collects variance and covariance components. Under \eqref{eq:model}, $V_{jl}(\Theta_l)$ is induced by the second-order moments of $(\boldsymbol{b}_{jl},\boldsymbol{g}_{jl},\boldsymbol{\epsilon}_{jl})$. Rather than requiring a correct parametric specification of $V_{jl}(\Theta_l)$, we introduce a working weight matrix $W_{jl}$ intended to approximate $V_{jl}(\Theta_l)^{-1}$. We assume throughout that each $W_{jl}$ is symmetric positive definite.

For fixed $W_{jl}$, define the generalized least-squares criterion
\[
Q_l(\beta_l,\boldsymbol{\gamma}_l)
=
\frac12\sum_{j=1}^N
(\boldsymbol{y}_{jl}-\boldsymbol{x}_{jl}\beta_l-Z_{jl}\boldsymbol{\gamma}_l)^\top
W_{jl}
(\boldsymbol{y}_{jl}-\boldsymbol{x}_{jl}\beta_l-Z_{jl}\boldsymbol{\gamma}_l).
\]
Under $H_{0l}$, profiling out $\boldsymbol{\gamma}_l$ yields
\[
\hat{\boldsymbol{\gamma}}_l
=
\Bigl(\sum_{j=1}^N Z_{jl}^\top W_{jl}Z_{jl}\Bigr)^{-1}
\sum_{j=1}^N Z_{jl}^\top W_{jl}(\boldsymbol{y}_{jl}-\boldsymbol{x}_{jl}\beta_l^0),
\]
provided that $\sum_{j=1}^N Z_{jl}^\top W_{jl}Z_{jl}$ is nonsingular.

Now let $n=\sum_{j=1}^N n_j$, $\boldsymbol{y}_l=(\boldsymbol{y}_{1l}^\top,\dots,\boldsymbol{y}_{Nl}^\top)^\top$, $\boldsymbol{x}_l=(\boldsymbol{x}_{1l}^\top,\dots,\boldsymbol{x}_{Nl}^\top)^\top$, and $Z_l=(Z_{1l}^\top,\dots,Z_{Nl}^\top)^\top$. Let $W_l\in\mathbb{R}^{n\times n}$ be block diagonal with diagonal blocks $W_{1l},\dots,W_{Nl}$. The profiled score for $\beta_l$ under $H_{0l}: \beta_l = \beta_l^0$ is
\begin{equation}\label{eq:score}
U_l^{\hat{\gamma}_l}
=
\boldsymbol{x}_l^\top W_l
(\boldsymbol{y}_l-\boldsymbol{x}_l\beta_l^0-Z_l\hat{\boldsymbol{\gamma}}_l).
\end{equation}
Equivalently,
\[
U_l^{\hat{\gamma}_l}
=
\boldsymbol{x}_l^\top W_l^{1/2}(I-H_l)W_l^{1/2}
(\boldsymbol{y}_l-\boldsymbol{x}_l\beta_l^0),
\]
where $H_l=W_l^{1/2}Z_l(Z_l^\top W_l Z_l)^{-1}Z_l^\top W_l^{1/2}$.

The score construction relies on the following mean-model assumption.

\begin{assumption}\label{ass1}
For each $l\in L$ and each $j\in\{1,\dots,N\}$,
\[
\mathbb{E}(\boldsymbol{y}_{jl}\mid \boldsymbol{x}_{jl},Z_{jl})
=
\boldsymbol{x}_{jl}\beta_l + Z_{jl}\boldsymbol{\gamma}_l.
\]
\end{assumption}

Under Assumption~\ref{ass1}, let $U_l^{\boldsymbol{\gamma}_l^\star} = \boldsymbol{x}_l^\top W_l
(\boldsymbol{y}_l-\boldsymbol{x}_l\beta_l^0-Z_l\boldsymbol{\gamma}_l^\star)$ denote the score evaluated at the true nuisance parameter $\boldsymbol{\gamma}^\star$. Then, under $H_{0l}: \beta_l = \beta_l^0$, $\mathbb{E}\!\left(
U_l^{\boldsymbol{\gamma}_l^\star}
\,\middle|\,
\boldsymbol{x}_l,Z_l
\right)=0$ for any fixed, or $(\boldsymbol{x}_l,Z_l)$-measurable, symmetric positive definite choice of $W_l$. Thus, the validity of fixed-effects inference does not depend on the correct modeling of the latent random-effects structure. This is the key robustness feature of the proposed approach: the working matrices $W_{jl}$ may improve efficiency and hence power, but asymptotic level control is obtained through the sign-flipping calibration introduced in the next section, not through correct modeling of the random part or through Fisher-information-based calibration \citep{liang1986longitudinal,hemerik2020robust}. In particular, misspecification of the random effects or covariance structure may reduce power, but does not by itself invalidate asymptotic type I error control.

%Under Assumption~\ref{ass1}, correct specification of the working weights is not required for unbiasedness of the estimating equation in \eqref{eq:score}. Thus, the validity of fixed-effects inference does not depend on correctly modeling the latent random-effects structure. This is the key robustness feature of the proposed approach: the working matrices $W_{jl}$ may improve efficiency and hence power, but asymptotic level control is obtained through the sign-flipping calibration introduced below, not through correct modeling of the random part or through Fisher-information-based calibration; see also \citet{liang1986longitudinal,hemerik2020robust}. In particular, misspecification of the random-effects or covariance structure may reduce power, but does not by itself invalidate asymptotic type I error control.

The following definition introduces the clusterwise contributions to the effective score for $\beta_l$, obtained from the profiled score \eqref{eq:score}. This decomposition is natural in view of the independence between clusters and underlies the resampling procedure introduced in the next section. To simplify notation, we make explicit the dependence of the score statistic on $\hat{\boldsymbol{\gamma}}l$, while suppressing its dependence on $\widehat W_{jl}$.

\begin{definition}\label{def:score}
For each outcome $l\in L$, define the fitted mean under $H_{0l}$ as $\hat{\boldsymbol{\mu}}_l=(\hat{\boldsymbol{\mu}}_{1l}^\top,\dots,\hat{\boldsymbol{\mu}}_{Nl}^\top)^\top$, where $\hat{\boldsymbol{\mu}}_{jl}=\boldsymbol{x}_{jl}\beta_l^0+Z_{jl}\hat{\boldsymbol{\gamma}}_l$. We then define the clusterwise contribution to the effective score by
\begin{equation}\label{eq:zeta}
\zeta_{jl}^{\hat{\gamma}_l}
=
\Bigl[(I-H_l)W_l^{1/2}\boldsymbol{x}_l\Bigr]_j^\top
\Bigl[W_l^{1/2}(\boldsymbol{y}_l-\hat{\boldsymbol{\mu}}_l)\Bigr]_j,
\end{equation}
where $[\cdot]_j$ denotes the subvector corresponding to cluster $j$. The associated score statistic is
\begin{equation}\label{eq:Sl}
S_l^{\hat{\gamma}_l}
=
n^{-1/2}\sum_{j=1}^N \zeta_{jl}^{\hat{\gamma}_l}.
\end{equation}
\end{definition}

%In practice, $S_l^{\hat{\gamma}_l}$ is computed after replacing $W_{jl}$ by an estimator $\widehat W_{jl}$. Possible choices include the identity matrix, corresponding to a working model without random effects, a diagonal matrix allowing for heteroscedasticity, or a working covariance induced by a plausible mixed-effects structure. The role of $\widehat W_{jl}$ is primarily to improve efficiency and finite-sample performance. When reliable prior information on within-cluster dependence is unavailable, $\widehat W_{jl}=I_{n_j}$, or more generally a diagonal specification, provides a natural default, in the spirit of a working-independence specification in generalized estimating equations \citep{liang1986longitudinal}.

In practice, $S_l^{\hat{\gamma}_l}$ is computed after replacing $W_{jl}$ by an estimator $\widehat W_{jl}$. Possible choices include the identity matrix, corresponding to a marginal working model without random effects, a mixed-effects working weight matrix implied by~\eqref{eq:model} under a chosen random-effects structure, or any other pre-specified matrix reflecting prior information on within-cluster variability. The role of $\widehat W_{jl}$ is to improve efficiency and finite-sample calibration; misspecification may reduce power, but
does not affect asymptotic type I error control. In practice, if no prior information on $V_{jl}$ is available, we recommend $\widehat W_{jl}=I_{n_j}$ or, more generally, a diagonal matrix whose entries capture heteroscedasticity while ignoring within-cluster correlation, in the spirit of using an
identity working correlation matrix in GEE \citep{liang1986longitudinal}.

\begin{remark}\label{remark1}
Assumption~\ref{ass1} requires the correct specification of the conditional mean only. In particular, the score statistic in \eqref{eq:Sl} may be derived directly from the marginal mean model
$\boldsymbol{y}_{jl} = \boldsymbol{x}_{jl}\beta_l + Z_{jl}\boldsymbol{\gamma}_l + \boldsymbol{\epsilon}_{jl}$, without committing to a specific random-effects formulation. Under the identity link and zero-mean random effects, the fixed-effect parameter $\beta_l$ has the same interpretation in the marginal model and in the mixed-effects representation \eqref{eq:model} \citep{zeger1988models,wooldridge2010econometric}. By contrast, in generalized LMMs, marginal and conditional regression parameters generally differ, with attenuation depending on within-cluster association \citep{neuhaus1991comparison}. Extending the present framework to such settings would therefore require a different set of assumptions.
\end{remark}

We now state regularity conditions for the clusterwise score contributions. Let $\boldsymbol{\zeta}_j^{\boldsymbol{\gamma}^\star}=(\zeta_{jl}^{\boldsymbol{\gamma}_l^\star}: l\in L)^\top \in \mathbb{R}^{|L|}$, where $\boldsymbol{\gamma}^\star=(\boldsymbol{\gamma}_l^\star:l\in L)$ collects the true nuisance parameters.

\begin{assumption}\label{ass2}
As $n\to\infty$, $\max_{1\le j\le N} n_j/n \to 0$.
\end{assumption}

\begin{assumption}\label{ass3}
There exists $d>2$ such that
\[
n^{-d/2}\sum_{j=1}^N \mathbb{E}\!\left[\|\boldsymbol{\zeta}_j^{\boldsymbol{\gamma}^\star}\|^d\right]\to 0.
\]
\end{assumption}

\begin{assumption}\label{ass4}
There exists a matrix $\Sigma_L\in\mathbb{R}^{|L|\times |L|}$ with strictly positive diagonal entries such that
\[
\frac{1}{n}\sum_{j=1}^N \mathrm{Var}(\boldsymbol{\zeta}_j^{\boldsymbol{\gamma}^\star}) \to \Sigma_L,
\]
and, for every $\epsilon>0$,
\[
\frac{1}{n}\sum_{j=1}^N
\mathbb{E}\!\left[
\|\boldsymbol{\zeta}_j^{\boldsymbol{\gamma}^\star}\|^2
\mathbf{1}\!\left\{\|\boldsymbol{\zeta}_j^{\boldsymbol{\gamma}^\star}\|>\epsilon\sqrt{n}\right\}
\right]
\to 0.
\]
\end{assumption}

Under Assumption~\ref{ass1} and the null hypothesis, Assumptions~\ref{ass2}--\ref{ass4} imply $n^{-1/2}\sum_{j=1}^N \boldsymbol{\zeta}_j^{\boldsymbol{\gamma}^\star}
\Rightarrow
N_{|L|}(0,\Sigma_L)$ \citep{hansen2007asymptotic,hansen2019asymptotic}. 
%Under Assumption~\ref{ass1} and the null hypothesis, Assumptions~\ref{ass2}--\ref{ass4} are standard sufficient conditions for a multivariate central limit theorem for triangular arrays of independent cluster-level score contributions with heterogeneous cluster sizes; see, for example, \citet{hansen2007asymptotic,hansen2019asymptotic}. Assumption~\ref{ass2} rules out dominant clusters, Assumption~\ref{ass3} is a Lyapunov-type moment condition, and Assumption~\ref{ass4} combines convergence of the asymptotic covariance with a multivariate Lindeberg condition. In particular,
%\[
%n^{-1/2}\sum_{j=1}^N \boldsymbol{\zeta}_j^{\boldsymbol{\gamma}^\star}
%\Rightarrow
%N_{|L|}(0,\Sigma_L).
%\]
%Assumption~\ref{ass6} will then justify replacement of $\boldsymbol{\gamma}^\star$ by $\hat{\boldsymbol{\gamma}}$ in the score vector.

We next impose conditions on nuisance-parameter estimation.

\begin{assumption}\label{ass5}
For each $l\in L$, there exists a deterministic sequence $r_n\to\infty$ such that
\[
r_n\|\hat{\boldsymbol{\gamma}}_l-\boldsymbol{\gamma}_l^\star\|=O_p(1).
\]
Moreover, $r_n\le \sqrt{n}$, and the same rate $r_n$ may be chosen for all $l\in L$.
\end{assumption}

The convergence rate of the estimators $\{\hat{\boldsymbol{\gamma}}_l : l\in L\}$ depends on the strength of within-cluster dependence and on the estimation method \citep{hansen2007asymptotic,hansen2019asymptotic,djogbenou2019asymptotic}. In large-$N$ asymptotics, rates are often of order $\sqrt{N}$ when within-cluster dependence is strong, or cluster sizes are uniformly bounded, whereas faster rates up to $\sqrt{n}$ may occur under weaker within-cluster dependence and additional regularity conditions. For our purposes, it suffices that each $\hat{\boldsymbol{\gamma}}_l$ is consistent at some rate $r_n\to\infty$; the exact rate is not needed for the asymptotic validity of the proposed test.

\begin{assumption}\label{ass6}
As $n\to\infty$,
\[
n^{-1/2}\sum_{j=1}^N
\left\|
\boldsymbol{\zeta}_j^{\hat{\boldsymbol{\gamma}}}
-
\boldsymbol{\zeta}_j^{\boldsymbol{\gamma}^\star}
\right\|
=o_p(1),
\]
where $\boldsymbol{\zeta}_j^{\hat{\boldsymbol{\gamma}}}
=
(\zeta_{jl}^{\hat{\boldsymbol{\gamma}}_l}:l\in L)^\top$.
\end{assumption}

Assumption~\ref{ass6} requires nuisance-parameter estimation to be asymptotically negligible for the score vector. Sufficient conditions can be obtained under Assumptions~\ref{ass2}--\ref{ass5}
along the lines of \citet[Theorem~2]{hemerik2020robust}. Under Assumptions~\ref{ass1}--\ref{ass6}, we can now construct the multivariate clusterwise score test (\emph{clip}) for \eqref{eq:null} and derive its asymptotic null distribution.

\section{Multivariate clusterwise score test}\label{sec:clustertest}

To test \eqref{eq:null}, we combine the marginal score statistics introduced in Definition \ref{def:score}. Write $\hat{\boldsymbol{\gamma}}
:=
(\hat{\boldsymbol{\gamma}}_l:l\in L)$.

\begin{definition}\label{def:test}
For each $l\in L$, let $S_l^{\hat{\boldsymbol{\gamma}}_l}$ denote the marginal score statistic in Definition~\ref{def:score}. A local multivariate test statistic for \eqref{eq:null} is
\begin{equation}\label{eq:test}
T^{\hat{\boldsymbol{\gamma}}}
=
\psi\bigl( |S_l^{\hat{\boldsymbol{\gamma}}_l} |:l\in L\bigr),
\end{equation}
where $\psi:\mathbb{R}^{|L|}\to\mathbb{R}$ is a non-decreasing function in each argument \citep[Section~6.2.2]{pesarin2001multivariate}. For brevity, we suppress the dependence of $T^{\hat{\boldsymbol{\gamma}}}$ on $\psi$.
\end{definition}

To approximate the null distribution of \eqref{eq:test}, we apply clusterwise sign-flipping to the vector of clusterwise score contributions. Let
\[
\mathcal{F}
=
\left\{
\mathrm{diag}(s_1 I_{n_1},\dots,s_N I_{n_N})
:\ s_j\in\{-1,+1\},\ j=1,\dots,N
\right\}
\]
denote the collection of all block-diagonal sign matrices, where the signs are sampled independently across clusters. We draw $B$ transformations $F_1,\dots,F_B$ from $\mathcal{F}$, with $F_1=I$ and $F_2,\dots,F_B$ sampled independently from the uniform distribution over $\mathcal{F}$. For each $b\in\{1,\dots,B\}$, define
\[
T^{\hat{\boldsymbol{\gamma}}}(F_b)
=
\psi\bigl(| S_1^{\hat{\boldsymbol{\gamma}}_1}(F_b)|, \dots, | S_{|L|}^{\hat{\boldsymbol{\gamma}}_{|L|}}(F_b)|),
\]
where, for each $l\in L$,
\[
S_l^{\hat{\boldsymbol{\gamma}}_l}(F_b)
=
n^{-1/2}\sum_{j=1}^N s_j^b\,\zeta_{jl}^{\hat{\boldsymbol{\gamma}}_l},
\qquad
s_j^b\in\{-1,+1\}.
\]
Here $s_j^b$ denotes the sign associated with cluster $j$ in $F_b$. Importantly, the same sign $s_j^b$ is applied jointly across all outcomes $l\in L$ within cluster $j$, thereby preserving cross-outcome dependence within each resampling replicate.

Let $\alpha\in(0,1)$ denote the nominal significance level. The resampling-based $p$-value for \eqref{eq:null} is
\[
p
=
\frac{1}{B}\sum_{b=1}^B
\mathbf{1}\!\left\{
T^{\hat{\boldsymbol{\gamma}}}(F_b)\ge T^{\hat{\boldsymbol{\gamma}}}(F_1)
\right\}.
\]
We reject \eqref{eq:null} at level $\alpha$ if $p\le \alpha$.

\begin{remark}\label{remark2}
In fully parametric mixed-effects inference, the working covariance is typically used both to estimate variance components and to calibrate score or Wald statistics through the Fisher information. As a result, covariance misspecification may affect finite-sample calibration unless robust variance estimation is used. By contrast, the procedure defined above is calibrated through the clusterwise sign-flipping resampling scheme. It therefore does not rely on the correct specification of the within-cluster covariance for asymptotic Type~I error control. The estimated working weights $\widehat W_{jl}$ are used only to improve efficiency and finite-sample performance.
\end{remark}

The following theorem establishes asymptotic validity in the univariate case. All proofs are given in Appendix~\ref{appA}.

\begin{theorem}\label{thm:block}
Let $|L|=1$ and let $l$ denote the unique element of $L$. Let $\{F_b\}_{b=1}^B$ be generated by independent clusterwise sign vectors with $F_1=I$. Under Assumptions~\ref{ass1}--\ref{ass6}, the test that rejects $H_{0l}:\beta_l=\beta_l^0$ versus $H_{1l}:\beta_l>\beta_l^0$ at nominal level $\alpha\in(0,1)$ when
$S_l^{\hat{\boldsymbol{\gamma}}_l}(F_1)
>
S_{l,(\lceil(1-\alpha)B\rceil)}^{\hat{\boldsymbol{\gamma}}_l}$, where $S_{l,(1)}^{\hat{\boldsymbol{\gamma}}_l}
\le
\cdots
\le
S_{l,(B)}^{\hat{\boldsymbol{\gamma}}_l}$ are the order statistics of $\{S_l^{\hat{\boldsymbol{\gamma}}_l}(F_b)\}_{b=1}^B$, has rejection probability converging to $\lfloor \alpha B\rfloor/B\le \alpha$ as $n\to\infty$.
\end{theorem}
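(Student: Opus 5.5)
The approach follows Hemerik, Goeman and Finos (2020): establish the joint asymptotic distribution of the vector $(S_l^{\hat{\boldsymbol{\gamma}}_l}(F_1),\dots,S_l^{\hat{\boldsymbol{\gamma}}_l}(F_B))$ conditional on the signs, show it is exchangeable in the limit, and then apply the standard rank argument for group-invariant tests. I will first work with the oracle contributions $\zeta_{jl}^{\boldsymbol{\gamma}_l^\star}$ and afterwards transfer the result to $\hat{\boldsymbol{\gamma}}_l$.

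\emph{Step 1 (oracle statistics).} Fix the sign vectors $s^1,\dots,s^B$, with $s^1\equiv 1$. Consider the $B$-vector
\[
\boldsymbol{S}^\star=\Bigl(n^{-1/2}\textstyle\sum_{j} s_j^b\zeta_{jl}^{\boldsymbol{\gamma}_l^\star}\Bigr)_{b=1}^B .
\]
The cluster terms $(s_j^b\zeta_{jl}^{\boldsymbol{\gamma}_l^\star})_b$ are independent across $j$. Under $H_{0l}$ and Assumption~\ref{ass1} they have mean zero. Their covariance between coordinates $b$ and $b'$ is $s_j^bs_j^{b'}\mathrm{Var}(\zeta_{jl}^{\boldsymbol{\gamma}_l^\star})$.

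Because the signs are random, I work with the i.i.d. uniform draws for $b\ge2$. For $b\neq b'$,
\[
n^{-1}\textstyle\sum_j s_j^bs_j^{b'}\mathrm{Var}(\zeta_{jl}^{\boldsymbol{\gamma}_l^\star})\to 0
\]
in probability. This is a weighted sum of independent Rademacher variables whose variance is $n^{-2}\sum_j \mathrm{Var}(\zeta_{jl}^{\boldsymbol{\gamma}_l^\star})^2$. That quantity is at most $\max_j \mathrm{Var}(\zeta_{jl}^{\boldsymbol{\gamma}_l^\star})/n$ times a bounded sum, and it vanishes by the Lindeberg part of Assumption~\ref{ass4}, which rules out dominant clusters. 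The comparison $b=1$ versus $b\ge2$ is handled the same way. The diagonal entries converge to $\Sigma_L>0$.

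A Cramér–Wold argument with the Lindeberg–Feller CLT for triangular arrays then gives, conditionally on the signs and along subsequences where the sign averages converge, $\boldsymbol{S}^\star\Rightarrow N_B(0,\Sigma_L I_B)$. Assumptions~\ref{ass3}--\ref{ass4} supply the Lindeberg condition uniformly over sign patterns, since $|s_j^b|=1$. Unconditionally, the same limit holds jointly with the signs.

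\emph{Step 2 (nuisance replacement).} For each $b$,
\[
\bigl|S_l^{\hat{\boldsymbol{\gamma}}_l}(F_b)-S_l^{\boldsymbol{\gamma}^\star_l}(F_b)\bigr|\le n^{-1/2}\textstyle\sum_j|\zeta_{jl}^{\hat{\boldsymbol{\gamma}}_l}-\zeta_{jl}^{\boldsymbol{\gamma}_l^\star}|=o_p(1),
\]
by Assumption~\ref{ass6}, uniformly in the signs. Slutsky's lemma therefore transfers the limit $N_B(0,\Sigma_L I_B)$ to the statistics actually computed. This is exactly why Assumption~\ref{ass6} is stated with the absolute-value sum rather than the signed sum.

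\emph{Step 3 (rank argument).} The limit law is continuous and i.i.d. across coordinates, so the rejection event $\{S_{(1)}\text{-coordinate} > \text{the } \lceil(1-\alpha)B\rceil\text{-th order statistic}\}$ has a boundary of probability zero. By the continuous mapping theorem and the portmanteau theorem, the rejection probability converges to
\[
P\bigl(G_1 > G_{(\lceil(1-\alpha)B\rceil)}\bigr),\qquad G_1,\dots,G_B \ \text{i.i.d. } N(0,\Sigma_L).
\]
By exchangeability and the absence of ties, $G_1$ is uniformly distributed over the ranks $1,\dots,B$. It exceeds the $\lceil(1-\alpha)B\rceil$-th order statistic exactly when its rank is greater than $\lceil(1-\alpha)B\rceil$. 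This happens with probability $(B-\lceil(1-\alpha)B\rceil)/B=\lfloor\alpha B\rfloor/B\le\alpha$.

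The main obstacle is Step 1: showing the asymptotic independence across $b$ with random, independently drawn signs. This requires controlling the cross-covariances $n^{-1}\sum_j s_j^bs_j^{b'}\mathrm{Var}(\zeta_j)$ with heterogeneous cluster sizes, for which Assumption~\ref{ass2} together with the Lindeberg condition must be leveraged. I would first bound $\max_j \mathrm{Var}(\zeta_j)/n\to0$ from Assumption~\ref{ass4}, then use Chebyshev's inequality on the Rademacher sum.
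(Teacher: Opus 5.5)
Your proposal is correct. Its overall architecture matches the paper's: an oracle CLT for the $B$-vector of flipped statistics, transfer to $\hat{\boldsymbol{\gamma}}_l$ through the absolute-value bound of Assumption~\ref{ass6}, then the rank argument.

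The genuine difference is in Step~1. The paper never conditions on the signs. It takes the summands to be $\boldsymbol{\xi}_{j,n}=n^{-1/2}(s_j^1\zeta_j,\dots,s_j^B\zeta_j)^\top$, with Rademacher signs independent of the data. Then $\mathbb{E}(s_j^bs_j^{b'})=0$ for $b\neq b'$ makes $\sum_j\mathrm{Var}(\boldsymbol{\xi}_{j,n})=n^{-1}\sum_j\mathrm{Var}(\zeta_j)\,I_B$ exactly. The Lyapunov condition follows immediately from Assumption~\ref{ass3}, since $\|\boldsymbol{\xi}_{j,n}\|^d=B^{d/2}n^{-d/2}|\zeta_j|^d$. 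What you call the main obstacle, controlling the random cross-covariances $n^{-1}\sum_j s_j^bs_j^{b'}\mathrm{Var}(\zeta_j)$, therefore never arises in the paper's route.

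Your conditional route is nonetheless valid:
\begin{itemize}
\item The Lindeberg part of Assumption~\ref{ass4} gives $\max_j\mathrm{Var}(\zeta_j)/n\le\epsilon^2+o(1)$. Assumption~\ref{ass2} is not actually needed for this.
\item Chebyshev's inequality then sends the off-diagonal terms to zero in probability.
\item The Lindeberg condition holds uniformly over sign patterns because $\|(s_j^1,\dots,s_j^B)\|^2=B$.
\item A sub-subsequence argument plus dominated convergence recovers the unconditional limit.
\end{itemize}
What your route buys is a stronger statement: conditional on the flips, the flipped vector is asymptotically $N_B(0,\sigma_l^2 I_B)$ in probability. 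That is a statement about the resampling distribution itself, at the cost of extra bookkeeping. The rejection probability in the theorem averages over the flips, so the paper's unconditional argument suffices and is shorter.

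Your Step~3 is a self-contained proof of Lemma~1 of Hemerik et al., which the paper simply cites.
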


In the same way, we reject $H_{0l}$ versus $H_{1l}: \beta_l < \beta_l^0$ if $S_l^{\hat{\boldsymbol{\gamma}}_l}(F_1) < S_{l,(\lceil \alpha B \rceil)}^{\hat{\boldsymbol{\gamma}}_l}$, and versus $H_{1l}: \beta_l \ne \beta_l^0$ if $S_l^{\hat{\boldsymbol{\gamma}}_l}(F_1) < S_{l,(\lceil (\alpha/2) B \rceil)}^{\hat{\boldsymbol{\gamma}}_l}$ or $S_l^{\hat{\boldsymbol{\gamma}}_l}(F_1) > S_{l,(\lceil (1-\alpha/2) B \rceil)}^{\hat{\boldsymbol{\gamma}}_l}$. The convergence is considered under $n\to\infty$, which together with Assumption~\ref{ass2} implies $N\to\infty$, corresponding to a large-$N$ asymptotic framework \citep{hansen2019asymptotic}. Theorem~\ref{thm:block} allows for unbalanced designs with $n_j\neq n_{j'}$.

The next lemma formalizes the key property underlying the multivariate extension: the entire vector of marginal score statistics is sign-flipped jointly at the cluster level, thereby preserving cross-outcome dependence within each resampling replicate.

\begin{lemma}\label{lemma:M}
Let $M_n$ be the $B\times |L|$ matrix with entries
\[
[M_n]_{bl}
=
S_l^{\hat{\boldsymbol{\gamma}}_l}(F_b),
\qquad
b\in\{1,\dots,B\},\ \ l\in L.
\]
Under Assumptions~\ref{ass1}--\ref{ass6}, as $n\to\infty$ the matrix $M_n$ converges in distribution to a random matrix $M$ whose rows are i.i.d. centered multivariate normal vectors with covariance matrix $\Sigma_L$.
\end{lemma}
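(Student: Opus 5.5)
The plan is to reduce to the score contributions evaluated at the true nuisance parameter, and then prove a joint central limit theorem for the stacked $B|L|$-dimensional vector $\mathrm{vec}(M_n)$. The CLT will follow from the Cram\'er--Wold device and the Lindeberg--Feller theorem applied to independent cluster-level summands.

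\textbf{Step 1 (removing $\hat{\boldsymbol{\gamma}}$).} Let $M_n^\star$ be the matrix with entries $n^{-1/2}\sum_j s_j^b\zeta_{jl}^{\boldsymbol{\gamma}_l^\star}$. Since $|s_j^b|=1$, every row satisfies
\[
\Bigl\|[M_n]_{b\cdot}-[M_n^\star]_{b\cdot}\Bigr\|\le n^{-1/2}\sum_{j=1}^N\bigl\|\boldsymbol{\zeta}_j^{\hat{\boldsymbol{\gamma}}}-\boldsymbol{\zeta}_j^{\boldsymbol{\gamma}^\star}\bigr\|.
\]
By Assumption~\ref{ass6} the right-hand side is $o_p(1)$ uniformly in the signs. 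Because $B$ is fixed, $M_n-M_n^\star=o_p(1)$, and by Slutsky it suffices to prove the claim for $M_n^\star$.

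\textbf{Step 2 (Cram\'er--Wold reduction).} Fix vectors $a_1,\dots,a_B\in\mathbb{R}^{|L|}$ and write
\[
\sum_b a_b^\top [M_n^\star]_{b\cdot}^\top = n^{-1/2}\sum_{j=1}^N c_j^\top\boldsymbol{\zeta}_j^{\boldsymbol{\gamma}^\star},
\qquad c_j:=\sum_b s_j^b a_b .
\]
We argue conditionally on the design $(\boldsymbol{x}_l,Z_l)$. The matrix $(I-H_l)W_l^{1/2}\boldsymbol{x}_l$ is design-measurable, so it is fixed given the design. Under $H_0$ and Assumption~\ref{ass1}, $\boldsymbol{y}_{jl}-\boldsymbol{x}_{jl}\beta_l^0-Z_{jl}\boldsymbol{\gamma}_l^\star$ has conditional mean zero. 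It follows that the $\boldsymbol{\zeta}_j^{\boldsymbol{\gamma}^\star}$ are independent across clusters and centered. The signs $(s_j^b)_{b\ge2}$ are i.i.d.\ Rademacher, independent across $j$ and of the data, and $s_j^1=1$. Hence the summands $c_j^\top\boldsymbol{\zeta}_j^{\boldsymbol{\gamma}^\star}$ are independent with mean zero.

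\textbf{Step 3 (variance).} We have $\mathbb{E}[s_j^bs_j^{b'}]=\mathbf{1}\{b=b'\}$; this also holds when one index equals $1$, since $\mathbb{E}s_j^{b'}=0$ for $b'\ge2$. Therefore $\mathbb{E}[c_jc_j^\top]=\sum_b a_ba_b^\top$. By independence of signs and data,
\[
\frac1n\sum_j\mathrm{Var}\bigl(c_j^\top\boldsymbol{\zeta}_j^{\boldsymbol{\gamma}^\star}\bigr)
=\sum_b a_b^\top\Bigl(\frac1n\sum_j\mathrm{Var}(\boldsymbol{\zeta}_j^{\boldsymbol{\gamma}^\star})\Bigr)a_b
\to\sum_b a_b^\top\Sigma_L a_b .
\]
The convergence uses Assumption~\ref{ass4}. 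The limit is exactly the variance of $\sum_b a_b^\top \boldsymbol{z}_b$ for i.i.d.\ rows $\boldsymbol{z}_b\sim N_{|L|}(0,\Sigma_L)$.

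\textbf{Step 4 (Lindeberg condition).} Since $\|c_j\|\le C:=\sum_b\|a_b\|$ deterministically, we have
\[
|c_j^\top\boldsymbol{\zeta}_j|^2\mathbf{1}\{|c_j^\top\boldsymbol{\zeta}_j|>\epsilon\sqrt n\}\le C^2\|\boldsymbol{\zeta}_j\|^2\mathbf{1}\{\|\boldsymbol{\zeta}_j\|>(\epsilon/C)\sqrt n\}.
\]
The Lindeberg part of Assumption~\ref{ass4} then gives the Lindeberg condition. Assumption~\ref{ass3}, a Lyapunov condition, could be used alternatively. Lindeberg--Feller yields a univariate normal limit with the variance of Step 3. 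If that variance is zero, the sum converges to $0$ in $L^2$, which is consistent with the claimed limit. By Cram\'er--Wold, $\mathrm{vec}(M_n^\star)$ converges to the stated Gaussian matrix $M$.

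\textbf{Main obstacle.} I expect the delicate point to be Step 2. The projection $(I-H_l)$ couples all clusters, so the contributions $\zeta_{jl}$ are not functions of cluster $j$ alone. Independence and centering must therefore be argued conditionally on the design, treating $(I-H_l)W_l^{1/2}\boldsymbol{x}_l$ as a fixed triangular array. The moment Assumptions~\ref{ass3}--\ref{ass4} must then be read conditionally, or else design randomness must be integrated out by dominated convergence.
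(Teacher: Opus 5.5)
Your proposal is correct and follows essentially the same route as the paper: replace $M_n$ by $M_n^\star$ using Assumption~\ref{ass6} and Slutsky, then prove a joint CLT for the stacked, independent cluster-level sign-flipped vectors, whose limiting covariance $I_B\otimes\Sigma_L$ comes from the orthogonality of the Rademacher signs. The differences are cosmetic: you use Cram\'er--Wold with the scalar Lindeberg--Feller theorem and the Lindeberg half of Assumption~\ref{ass4}, where the paper applies a multivariate Lyapunov CLT via Assumption~\ref{ass3}; your explicit conditioning on the design to justify cross-cluster independence of the $\boldsymbol{\zeta}_j^{\boldsymbol{\gamma}^\star}$ also makes precise a point the paper merely asserts.
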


Lemma~\ref{lemma:M} yields asymptotic row-exchangeability of the sign-flipped score matrix. Under $H_0$, the resampled rows are asymptotically i.i.d., which justifies combining the marginal scores through $\psi$.

\begin{theorem}\label{thm:local}
Let $T^{\hat{\boldsymbol{\gamma}}}(F_b)
=
\psi\bigl(| S_1^{\hat{\boldsymbol{\gamma}}_1}(F_b)|, \dots, | S_{|L|}^{\hat{\boldsymbol{\gamma}}_{|L|}}(F_b)|)$ as in Definition~\ref{def:test}. Under the null \eqref{eq:null}, Assumptions~\ref{ass1}--\ref{ass6}, and Lemma~\ref{lemma:M}, the test that rejects \eqref{eq:null} when
$T^{\hat{\boldsymbol{\gamma}}}(F_1)
>
T_{(\lceil(1-\alpha)B\rceil)}^{\hat{\boldsymbol{\gamma}}}$, where $T_{(1)}^{\hat{\boldsymbol{\gamma}}}
\le
\cdots
\le
T_{(B)}^{\hat{\boldsymbol{\gamma}}}$
are the order statistics of $\{T^{\hat{\boldsymbol{\gamma}}}(F_b)\}_{b=1}^B$, has rejection probability converging to $\lfloor \alpha B\rfloor/B\le \alpha$ as $n\to\infty$.
\end{theorem}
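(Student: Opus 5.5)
The plan is to reduce the statement to Lemma~\ref{lemma:M} via the continuous mapping theorem, and then use an exchangeability argument for the limiting statistics. Define the map $\Psi:\mathbb{R}^{B\times|L|}\to\mathbb{R}^B$ that sends a matrix with rows $m_b$ to the vector $\bigl(\psi(|m_{b1}|,\dots,|m_{b|L|}|)\bigr)_{b=1}^B$. Then $(T^{\hat{\boldsymbol{\gamma}}}(F_b))_{b=1}^B=\Psi(M_n)$. I will assume, as is implicit in Definition~\ref{def:test}, that $\psi$ is continuous; otherwise I would require that its discontinuity set have Lebesgue measure zero. Since the limit $M$ has a Gaussian law with nondegenerate marginals (the diagonal of $\Sigma_L$ is positive), the continuous mapping theorem together with Lemma~\ref{lemma:M} gives $\Psi(M_n)\Rightarrow \Psi(M)=:(T_1,\dots,T_B)$. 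The rows of $M$ are i.i.d., so $T_1,\dots,T_B$ are i.i.d. copies of $T:=\psi(|G_1|,\dots,|G_{|L|}|)$ with $G\sim N_{|L|}(0,\Sigma_L)$.

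Next I write the rejection event as a function of the vector. Let $A\subset\mathbb{R}^B$ be the set of $t$ such that $t_1>t_{(\lceil(1-\alpha)B\rceil)}$. The rejection probability is then $P(\Psi(M_n)\in A)$. By the portmanteau theorem this converges to $P(\Psi(M)\in A)$, provided $P(\Psi(M)\in\partial A)=0$. The boundary of $A$ is contained in the union of the ties $\{t_b=t_{b'}\}$ with $b\ne b'$. So I need the law of $T$ to be atomless, because two independent copies of $T$ then coincide with probability zero.

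This atomless property is the main obstacle. $T$ is a monotone function of the absolutely continuous vector $|G|$, but a general non-decreasing $\psi$ could be constant on a set of positive probability; for example, a truncated combination. I would first handle the standard combining functions used in the paper (sum, max, sum of squares), for which $T$ clearly has a continuous distribution. For those I would state the condition $P(T=c)=0$ for all $c$ as a mild requirement on $\psi$. If it is not granted, the conclusion weakens to $\limsup\le\lfloor\alpha B\rfloor/B$. To get that version, I would replace $A$ by its closure and argue that ties in the limit can only make the strict-inequality rejection less likely.

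Finally, I compute $P(\Psi(M)\in A)$. Since $T_1,\dots,T_B$ are i.i.d. and continuous, they are almost surely distinct, and the rank of $T_1$ among them is uniform on $\{1,\dots,B\}$. The event $T_1>T_{(\lceil(1-\alpha)B\rceil)}$ says that $T_1$ has rank strictly larger than $\lceil(1-\alpha)B\rceil$. The number of such ranks is $B-\lceil(1-\alpha)B\rceil=\lfloor\alpha B\rfloor$. The probability is therefore $\lfloor\alpha B\rfloor/B\le\alpha$, which is the claim. This is the same argument as in the proof of Theorem~\ref{thm:block}, with the scalar scores replaced by $T$.
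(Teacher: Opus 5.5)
Your main line is correct and is essentially the paper's proof. Lemma~\ref{lemma:M} plus the continuous mapping theorem (the paper likewise simply asserts that $\psi$ is continuous) gives i.i.d.\ limits $\psi(|\boldsymbol{Z}_b|)$. The paper then invokes \citet[Lemma~1]{hemerik2020robust}, which you reprove via the portmanteau theorem and the uniform rank of $T_1$. Your insistence on an atomless limit is well placed. That lemma needs continuous limits, and the paper never checks this. For constant $\psi$, which is non-decreasing and continuous, the rejection probability is identically $0$ rather than $\lfloor\alpha B\rfloor/B$, so the theorem as stated needs your extra condition.

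The fallback for atoms, however, does not work as described. Take $k=\lceil(1-\alpha)B\rceil<B$. The portmanteau upper bound is $P(\Psi(M)\in\bar A)$, and $\bar A$ contains tied points such as $t_1=\cdots=t_B$: raising $t_1$ slightly puts the point in $A$. So the closure counts ties as rejections, and for constant $\psi$ this bound equals $1$.

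What is true is that ties make the \emph{limit} event less likely. At most $B-k$ indices can strictly exceed $T_{(k)}$, so exchangeability gives $P(T_1>T_{(k)})\le\lfloor\alpha B\rfloor/B$. To transfer this to finite $n$, avoid $T$-space and apply the portmanteau theorem directly to the rejection set in $M$-space, $R=\{m:\#\{b:\psi(|m_b|)<\psi(|m_1|)\}\ge k\}$. The boundary of $R$ lies in the union over $b$ of the boundaries of the pairwise sets $\{\psi(|m_b|)<\psi(|m_1|)\}$, and these are null by monotonicity of $\psi$. Indeed, suppose $(m_1,m_b)$ has nonzero coordinates and lies on such a boundary. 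Then $(tm_1,m_b/t)$ is interior to the set for $t>1$ and interior to its complement for $t<1$. Hence each such curve meets the boundary at most once, and the boundary is null under the Gaussian limit, whose coordinates are a.s.\ nonzero because $\Sigma_L$ has positive diagonal. This gives convergence of the rejection probability to $P(T_1>T_{(k)})\le\lfloor\alpha B\rfloor/B$, with equality when $T$ is atomless.
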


Theorem~\ref{thm:local} yields an asymptotically valid local test for $L\subseteq\{1,\dots, M\}$ and therefore implies weak control of the FWER. Multiple choices of the combining function $\psi$ are possible \citep{pesarin2001multivariate}; in our implementation, we use the max-$T$ approach of \citet{westfall1993resampling}.

More generally, once a valid multivariate score test and its joint resampling distribution are available, they may be combined with closed-testing procedures, including methods that yield confidence bounds for the true discovery proportion \citep{andreella2023permutation,vesely2023permutation,blain2022notip}. This broader applicability is relevant because closed testing procedures are, in a precise sense, optimal: any multiple-testing procedure can be rewritten in terms of closed testing and can be uniformly improved by its closed-testing version \citep{goeman2021only}.

%Theorem~\ref{thm:local} yields an asymptotically valid local test for $L\subseteq\{1,\dots,M\}$ and therefore implies weak control of the FWER. Multiple choices of the combining function $\psi$ are possible \citep{pesarin2001multivariate}. In our implementation, we use the max-$T$ approach of \citet{westfall1993resampling}. More generally, once a valid multivariate score test and its joint resampling distribution are available, one may combine them with resampling-based multiple-testing procedures based on the closed testing principle, including procedures yielding confidence bounds for the true discovery proportion; see, e.g., \citet{andreella2023permutation,vesely2023permutation,blain2022notip}. This broader applicability is relevant because closed testing procedures are, in a precise sense, optimal: any multiple-testing procedure can be rewritten in terms of closed testing and uniformly improved by its closed-testing version \citep{goeman2021only}.

The validity results in Theorems~\ref{thm:block}--\ref{thm:local} rely on asymptotic row-exchangeability of the sign-flipped score statistics. To improve finite-sample calibration under heterogeneous score scales, we also consider a studentized version of the test \citep{de2022inference}, obtained by rescaling each sign-flipped score by an estimated standard deviation. The resulting procedure is asymptotically equivalent to the original test and often improves finite-sample calibration and power.

\begin{definition}\label{def:studentized}
For each $l\in L$, define the marginal variance estimator $\hat\sigma_l^2=\frac{1}{n}\sum_{j=1}^N \bigl(\zeta_{jl}^{\hat{\boldsymbol{\gamma}}_l}\bigr)^2$. The corresponding marginally studentized sign-flipped score statistic is
\begin{equation*}
    \widetilde S_l^{\hat{\boldsymbol{\gamma}}_l}(F_b)
:=
\frac{S_l^{\hat{\boldsymbol{\gamma}}_l}(F_b)}{\hat\sigma_l},
\qquad b=1,\dots,B.
\end{equation*}
\end{definition}

Studentization is performed marginally, outcome by outcome. Cross-outcome dependence is handled by applying the same clusterwise sign-flips synchronously across all outcomes.

\begin{lemma}\label{lem:studentized}
Assume that, for each $l\in L$,
$\hat{\sigma}_l^2 \xrightarrow{p} \sigma_l^2$ for some $\sigma_l^2\in(0,\infty)$, and let
$D_\sigma
=
\mathrm{diag}(\sigma_l:l\in L)$.
Then, under Assumptions~\ref{ass1}--\ref{ass6}, replacing $S_l^{\hat{\boldsymbol{\gamma}}_l}(F_b)$ by $\widetilde S_l^{\hat{\boldsymbol{\gamma}}_l}(F_b)$ in Definition~\ref{def:test} leaves the conclusions of Theorems~\ref{thm:block} and~\ref{thm:local} unchanged. In particular, the studentized sign-flipped statistics have the same asymptotic row-exchangeability property as in Lemma~\ref{lemma:M}, with limiting row covariance matrix $D_\sigma^{-1}\Sigma_L D_\sigma^{-1}$.
\end{lemma}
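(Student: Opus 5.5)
The plan is to reduce everything to Lemma~\ref{lemma:M} through a continuous mapping and Slutsky argument. Write $\widetilde M_n$ for the $B\times|L|$ matrix with entries $[\widetilde M_n]_{bl}=\widetilde S_l^{\hat{\boldsymbol{\gamma}}_l}(F_b)$. Then
\[
\widetilde M_n = M_n \widehat D^{-1},
\qquad \widehat D=\mathrm{diag}(\hat\sigma_l:l\in L).
\]
The important feature is that $\widehat D$ does not depend on $b$. This holds because $(s_j^b\zeta_{jl}^{\hat{\boldsymbol{\gamma}}_l})^2=(\zeta_{jl}^{\hat{\boldsymbol{\gamma}}_l})^2$, so $\hat\sigma_l$ is invariant under every $F\in\mathcal{F}$. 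Consequently, every row of $M_n$ is rescaled by the same diagonal matrix.

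By assumption, $\hat\sigma_l^2\xrightarrow{p}\sigma_l^2>0$. Since $t\mapsto t^{-1/2}$ is continuous at $\sigma_l^2$, we get $\widehat D^{-1}\xrightarrow{p}D_\sigma^{-1}$, which is a constant matrix. Lemma~\ref{lemma:M} gives $M_n\Rightarrow M$. Slutsky's lemma for random matrices, applied to the jointly convergent pair $(M_n,\widehat D^{-1})\Rightarrow(M,D_\sigma^{-1})$ with the continuous map $(A,D)\mapsto AD$, then yields
\[
\widetilde M_n\Rightarrow MD_\sigma^{-1}.
\]
The rows of $M$ are i.i.d. $N_{|L|}(0,\Sigma_L)$, so the rows of $MD_\sigma^{-1}$, namely $D_\sigma^{-1}$ applied to each row, are i.i.d. $N_{|L|}(0,D_\sigma^{-1}\Sigma_L D_\sigma^{-1})$. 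This establishes the asymptotic row-exchangeability claim together with the stated limiting covariance.

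It remains to show that the conclusions of Theorems~\ref{thm:block} and~\ref{thm:local} carry over. The cleanest route is to observe that the proofs of those theorems use only one input: a matrix of resampled statistics converging to a limit with i.i.d. continuous rows. I will rerun that argument with $\widetilde M_n$ in place of $M_n$.

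\textbf{Univariate case.} When $|L|=1$, dividing by the common positive scalar $\hat\sigma_l$ does not change the ordering of $\{S_l(F_b)\}_b$. The rejection event is therefore literally identical, and Theorem~\ref{thm:block} applies verbatim (on the event $\hat\sigma_l>0$, whose probability tends to one).

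\textbf{Multivariate case.} Define $\widetilde T(F_b)=\psi(|\widetilde S_l(F_b)|:l\in L)$. By the continuous mapping theorem, the vector $(\widetilde T(F_1),\dots,\widetilde T(F_B))$ converges to $(\psi(|R_b|))_{b}$, where $R_b$ denotes the rows of $MD_\sigma^{-1}$. These limits are i.i.d., hence exchangeable. The rank of the first component is then uniform, up to ties, and the rejection probability converges to $\lfloor\alpha B\rfloor/B$ exactly as in the proof of Theorem~\ref{thm:local}.

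The main obstacle is this last step. Passing the rejection probability to the limit requires the event $\{\widetilde T(F_1)>\widetilde T_{(\lceil(1-\alpha)B\rceil)}\}$ to be a continuity set of the limit law. That in turn needs $\psi$ to be continuous, or at least continuous almost everywhere under the Gaussian limit, and needs ties among the $\psi(|R_b|)$ to have probability zero. I would handle this under exactly the same conditions on $\psi$ that are used in Theorem~\ref{thm:local}, for instance the max-$T$ choice. Then $\psi(|R_b|)$ has a continuous distribution because the diagonal of $D_\sigma^{-1}\Sigma_L D_\sigma^{-1}$ is strictly positive. With that in place, the Portmanteau theorem applies. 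Since the limit differs from that in Lemma~\ref{lemma:M} only through the fixed invertible scaling $D_\sigma^{-1}$, no new technical issue arises beyond those already addressed for Theorem~\ref{thm:local}.
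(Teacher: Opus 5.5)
Your proposal is correct and follows essentially the same route as the paper. Both arguments combine Lemma~\ref{lemma:M}, the consistency $\hat\sigma_l^{-1}\xrightarrow{p}\sigma_l^{-1}$ of the common, flip-invariant scale, and Slutsky's theorem to obtain rows that are asymptotically i.i.d.\ $N_{|L|}(0,D_\sigma^{-1}\Sigma_L D_\sigma^{-1})$, and then rerun Theorems~\ref{thm:block} and~\ref{thm:local}; the paper phrases the Slutsky step as $\max_b|(\hat\sigma_l^{-1}-\sigma_l^{-1})S_l^{\hat{\boldsymbol{\gamma}}_l}(F_b)|=o_p(1)\cdot O_p(1)$ rather than via the product map $(A,D)\mapsto AD$, while your exact rank-invariance argument for $|L|=1$ and your remark that the multivariate step needs $\psi$ continuous with $\psi(|R_b|)$ continuously distributed are sound refinements of points the paper leaves implicit.
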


\subsection{Extension to complex design}

The same framework can accommodate more complex designs, including crossed random effects \citep{debruine2021understanding,visalli2024lmeeeg,burki2018accounting}. Rather than specifying a full crossed random-effects model, one may recast the problem as a multivariate inference task by treating the item index as the outcome dimension. Let participants be indexed by $j=1,\dots,N$, items by $m=1,\dots,M$, and repeated observations by $i$, and write $y_{ijm}$ for the response of participant $j$ to item $m$ at trial $i$. Clusterwise sign-flipping is then applied at the participant level jointly across items, preserving both within-participant dependence and cross-item dependence induced by shared participants. A global test across items may then be obtained, for example, through the max-$T$ procedure, provided that the target fixed effect is common across items.

%The same framework can accommodate more complex designs, including crossed random effects, which are common in experimental psychology \citep{debruine2021understanding,visalli2024lmeeeg,burki2018accounting}. Rather than specifying a full crossed random-effects model, one may recast the problem as a multivariate inference task by treating the item index as the outcome dimension. Specifically, let participants be indexed by $j=1,\dots,N$, items by $m=1,\dots,M$, and repeated observations by $i$. Writing $y_{ijm}$ for the response of participant $j$ to item $m$ at trial $i$, one may view $m$ as indexing the multivariate outcome and apply clusterwise sign-flipping at the participant level jointly across items. This preserves both within-participant dependence and cross-item dependence induced by shared participants. A global test across items may then be obtained, for example, through the max-$T$ procedure, provided that the target fixed effect is common across items.

\section{Simulations}\label{sec:simulations}

Across all simulation scenarios, we compare the proposed \emph{clip} procedure, introduced in Section~\ref{sec:clustertest}, with standard benchmark methods. We consider four implementations of \emph{clip}, i.e., under different working weight matrices. Specifically, we consider: (i) a working-independence version, denoted ``clip $(0\mid \text{Cluster})$'' in the plots, i.e., $\widehat W_{jl}=I$; %in this case, within-cluster dependence is ignored at the weighting stage and handled nonparametrically at the resampling stage through clusterwise sign-flipping; 
(ii) a version based on the working covariance implied by the full LMM in~\eqref{eq:model}, denoted ``clip $(1+X+Z\mid \text{Cluster})$''; (iii) a version based on a random-intercept working model, denoted ``clip $(1\mid \text{Cluster})$''; and (iv) an oracle version based on the true covariance, denoted ``clip true''. In cases (ii) and (iii), the working covariance matrices are estimated using the \texttt{lme4} \texttt{R} package \citep{bates2015fitting} under the null hypothesis \eqref{eq:null} with $\beta_l^0 = 0$. In the crossed random-effects scenario (Subsection~\ref{subsec:sim_crossed}), the item-level random intercept is included as well.

In all scenarios, we perform $1000$ simulations, use a nominal significance level $\alpha = 0.05$, and generate $B = 1000$ clusterwise sign-flip transformations.

\subsection{Univariate Scenario}\label{subsec:sim_uni}

We compare the \emph{clip} approach with LMM, linear models using the usual OLS variance estimator (LM) and the heteroskedasticity-consistent version \citep{mackinnon1985some} (LM HC3), and GEE with identity working correlation matrix in terms of type I error control and power when $M=1$. 

The data are generated from model~\eqref{eq:model} with a single nuisance covariate ($q=1$). The covariates $x_{ij}$ and $z_{ij}$ are drawn from a bivariate standard normal distribution with correlation $0.7$. Within-subject random effects are drawn from a multivariate normal distribution with mean zero and an equicorrelation structure with $\rho = 0.5$. The inferential target is the fixed effect of $x_{ij}$. We test $H_0:\beta=0$ versus $H_0:\beta \ne 0$, setting $\beta=0$ to evaluate Type~I error control and $\beta=0.5$ to assess power, while fixing $\gamma=2$ in both scenarios.

\begin{figure}
    \centering
        \caption{Estimated type I error considering $N \in \{20,30,40,50\}$ number of clusters and $n_j \sim \text{Uniform}(10,30)$ repeated measurements. Each line represents one model, and the grey area around the solid horizontal black line represents the $0.95$ confidence bound for $\alpha = 0.05$.}
    \label{fig:error}
\includegraphics[width=.7\linewidth]{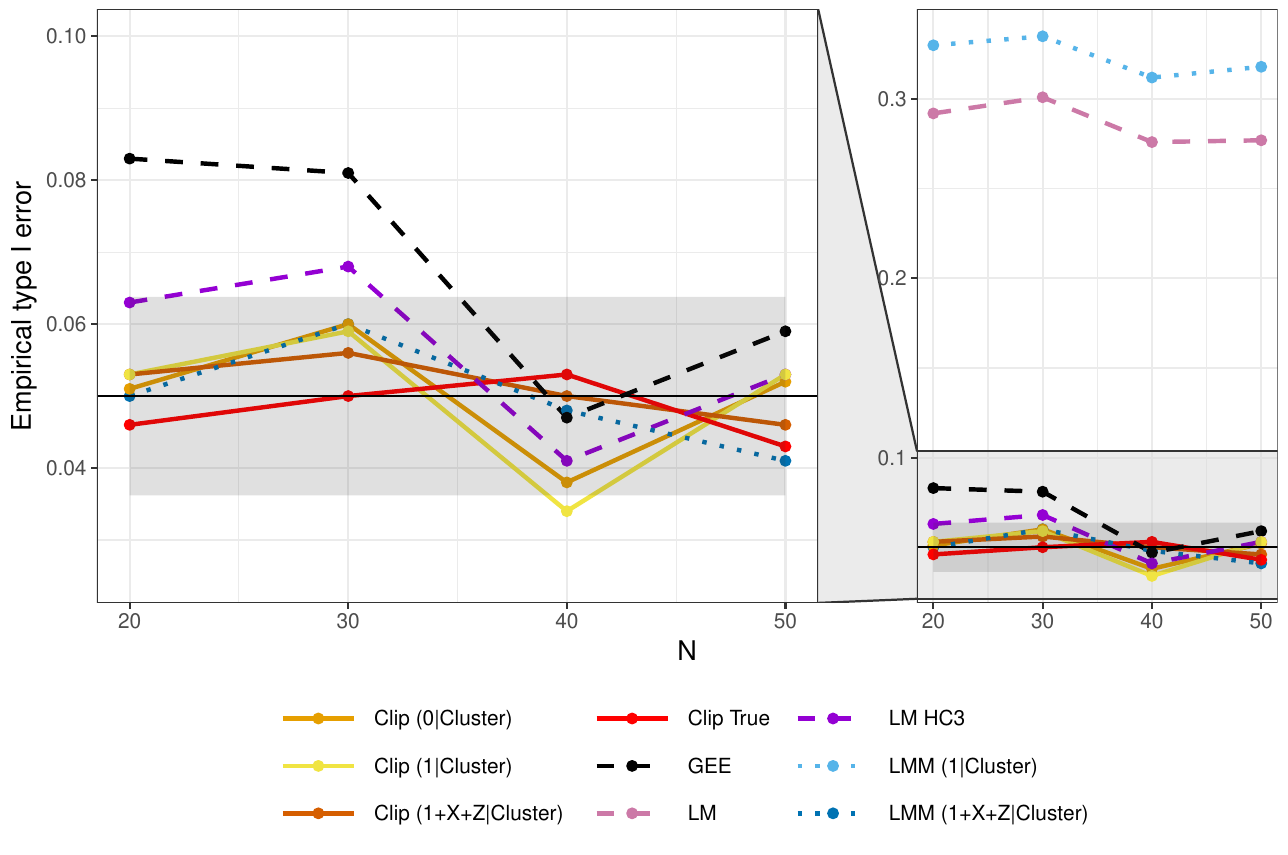}
\end{figure}

Figures \ref{fig:error} and \ref{fig:power} show the type I error and power for $N \in \{20,30,40,50\}$ with $n_j \sim \text{Uniform}(10,30)$. The performance of the LMM is highly sensitive to the correct specification of the random-effects structure, whereas HC3 is anti-conservative in some small-sample scenarios. GEE systematically fails to achieve the nominal type~I error level, even for moderate to large values of $N$. The proposed method, instead, consistently controls the type~I error while attaining empirical power comparable to that of the correctly specified LMM. Notably, in the \emph{clip} approach, variance misspecification primarily affects power rather than type~I error control, differently from what is observed for the LMM. As expected, higher power is obtained when the true covariance structure is used or when the covariance model is correctly specified and estimated, notably surpassing the classical LMM in terms of power.

\begin{figure}
    \centering
        \caption{Estimated power considering $N \in \{20,30,40,50\}$ number of clusters and $n_j \sim \text{Uniform}(10,30)$ repeated measurements. Each line represents one model.}
    \label{fig:power}
\includegraphics[width=.6\linewidth]{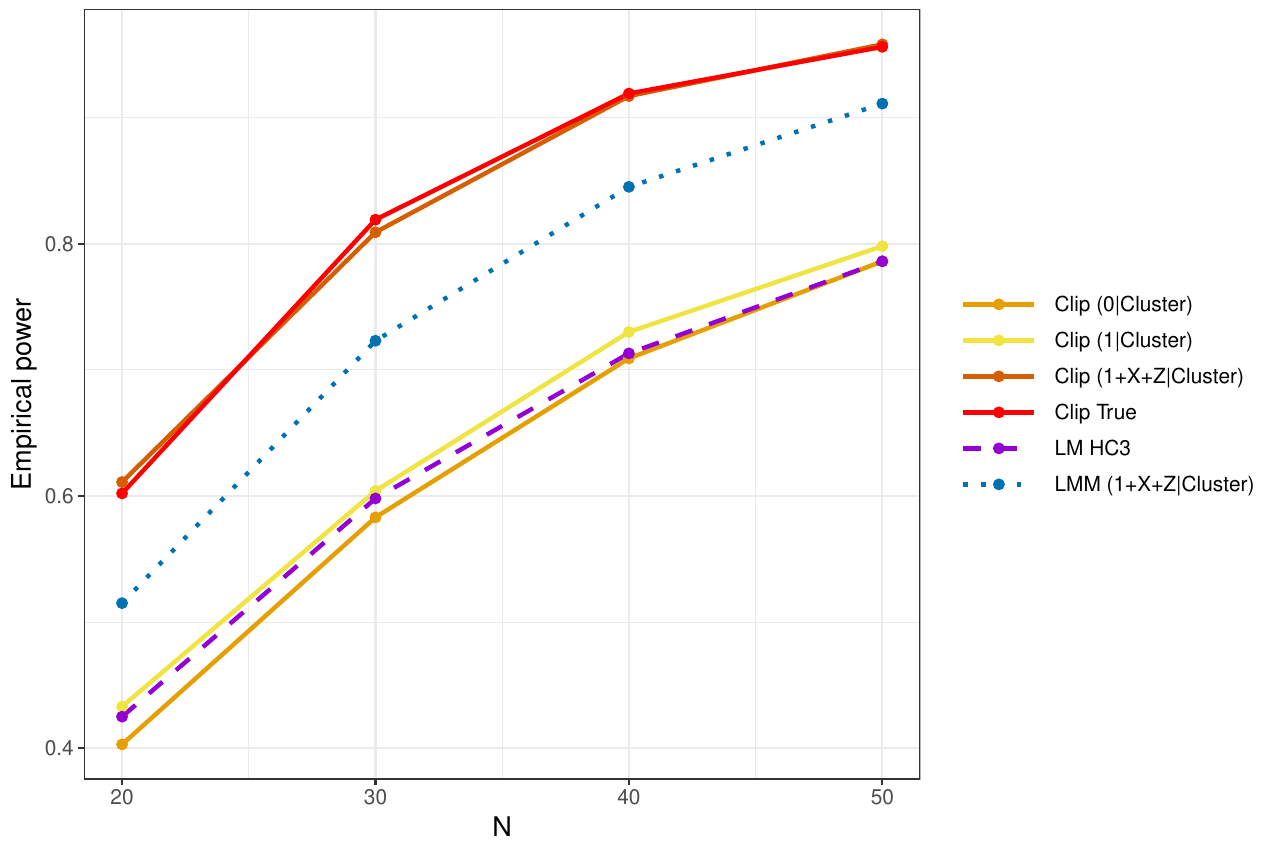}
\end{figure}

\subsection{Multivariate Scenario}\label{subsec:sim_multi}

We assess FWER control and power for our method relative to the LMM and LM HC3 in the case of multivariate responses. Multiple testing is addressed via the Holm--Bonferroni correction \citep{holm1979simple} for the LMM and LM HC3, and via the max-$T$ approach \citep{westfall1993resampling} for our method.

The data are simulated according to Equation~\eqref{eq:model} with $M = 10$, assuming an equicorrelation structure among the dependent variables with correlation equal to $0.4$ and standard deviation taking values in the set $\{1, 2, 3, 4, 5\}$. The fixed-effects and random-effects specification follows Subsection~\ref{subsec:sim_uni}, with the same random-effects structure used for each outcome. We evaluate FWER over the first eight outcomes by setting $\beta_1,\dots,\beta_8=0$, while setting $\beta_9=\beta_{10}=0.2$ so that power can be assessed on the remaining two outcomes.

Figures \ref{fig:error_multi} and \ref{fig:power_multi} show the estimated FWER and power respectively considering $N = 100$ and $n_j \sim \text{Uniform}(20, 50)$. We note that the LMM and LM HC3 results under Holm-Bonferroni correction are strongly conservative, and the conservativeness increases with the increase in observed correlation between the dependent variables. In contrast, the method proposed reaches the nominal FWER level fixed at $0.05$, maintaining good power properties as shown in Figure \ref{fig:power_multi}. As in the univariate case, we gain power using the \emph{clip} approach when a true covariance matrix is used or properly fitted.

\begin{figure}
    \centering
        \caption{Estimated FWER considering $M=10$ outcomes, $N = 100$ number of clusters and $n_j \sim \text{Uniform}(20,50)$ repeated measurements. Each line represents one model, and the grey area around the solid horizontal grey line represents the $0.95$ confidence bound for $FWER = 0.05$. }
    \label{fig:error_multi}
\includegraphics[width=.7\linewidth]{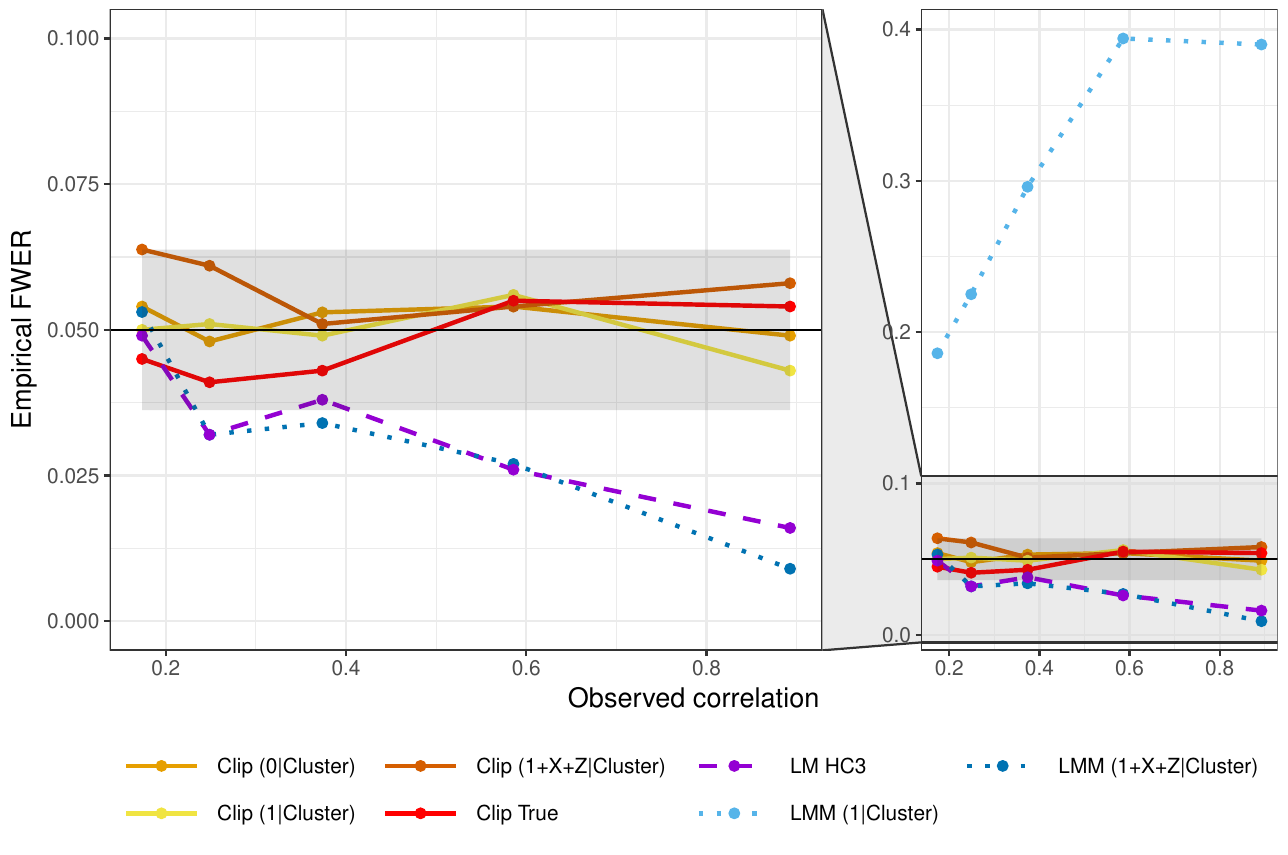}
\end{figure}

\begin{figure}
    \centering
        \caption{Estimated power considering $N= 100$ number of clusters and $n_j \sim \text{Uniform}(20,50)$ repeated measurements. Each line represents one model.}
    \label{fig:power_multi}
\includegraphics[width=.6\linewidth]{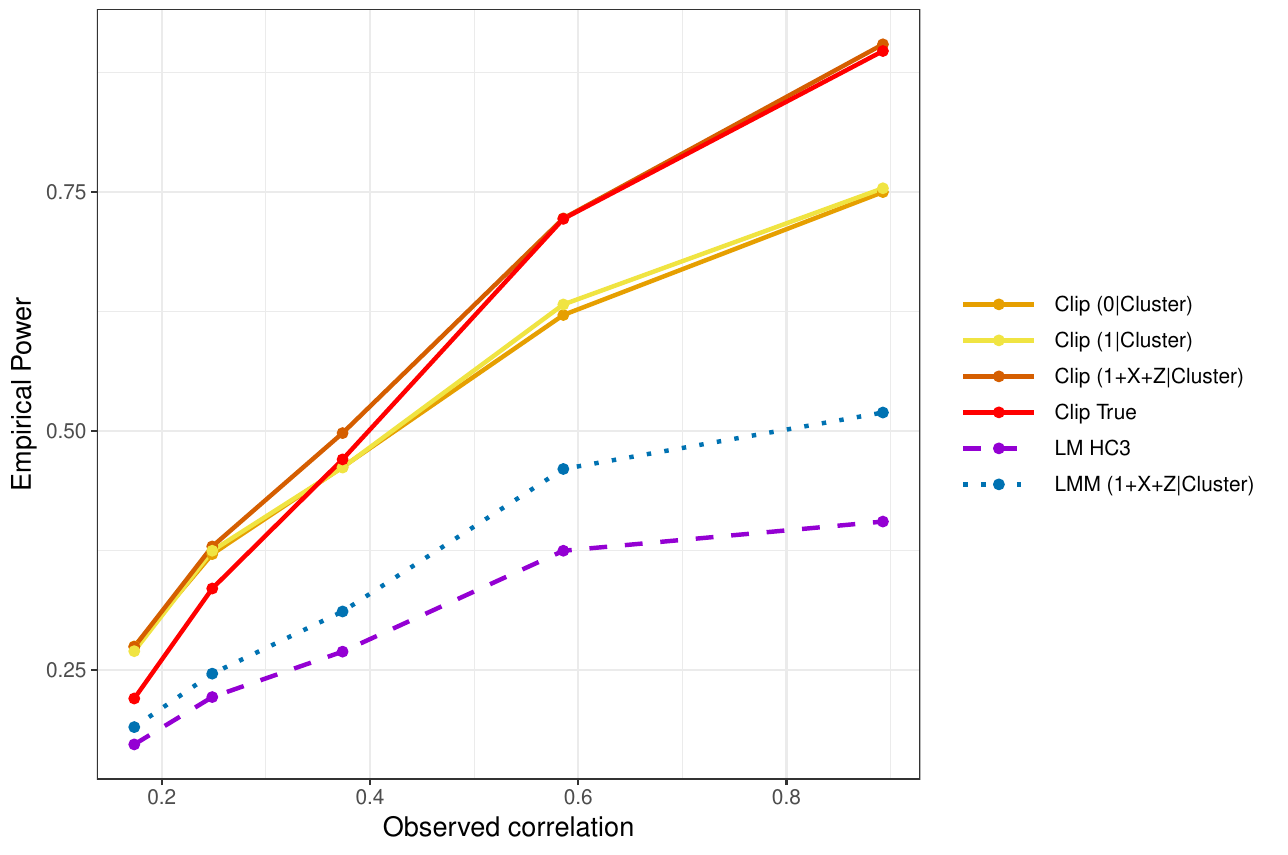}
\end{figure}

\subsection{Crossed-effects Scenario}\label{subsec:sim_crossed}

Here, we analyze the error rate and power in the case of crossed random effects. The data-generating mechanism follows Subsection~\ref{subsec:sim_uni}, with the addition of an item-specific random intercept drawn from a normal distribution with standard deviation $5$. Across all simulations, the number of items is fixed at $10$, and the number of observations within each cluster is set to $n_j = 20$.

Figures \ref{fig:error_crossed} and \ref{fig:power_crossed} report the empirical type I error rate and power, respectively. As in Subsection \ref{subsec:sim_uni}, we observe that LMMs fail to control the type I error when the model is misspecified. By contrast, all versions of \emph{clip} properly control the type I error while maintaining satisfactory power.

\begin{figure}
    \centering
        \caption{Estimated type I error considering $N \in \{5, 10,20,30,40,50\}$ number of clusters and $n_j = 20$ repeated measurements. Each line represents one model, and the grey area around the solid horizontal black line represents the $0.95$ confidence bound for $\alpha = 0.05$.}
    \label{fig:error_crossed}
\includegraphics[width=.7\linewidth]{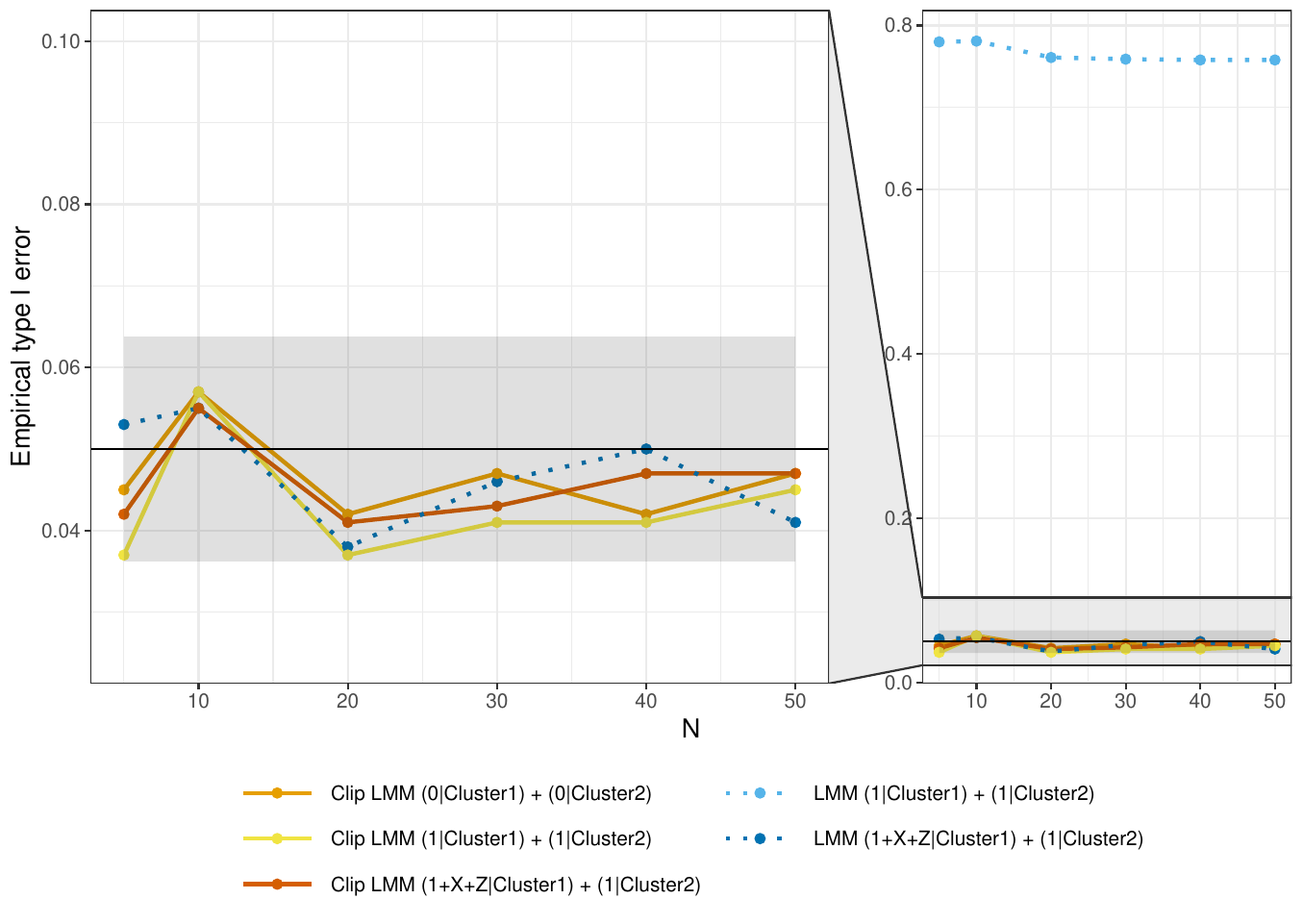}
\end{figure}

\begin{figure}
    \centering
        \caption{Estimated power considering $N \in \{5, 10,20,30,40,50\}$ number of clusters and $n_j = 20$ repeated measurements. Each line represents one model.}
    \label{fig:power_crossed}
\includegraphics[width=.6\linewidth]{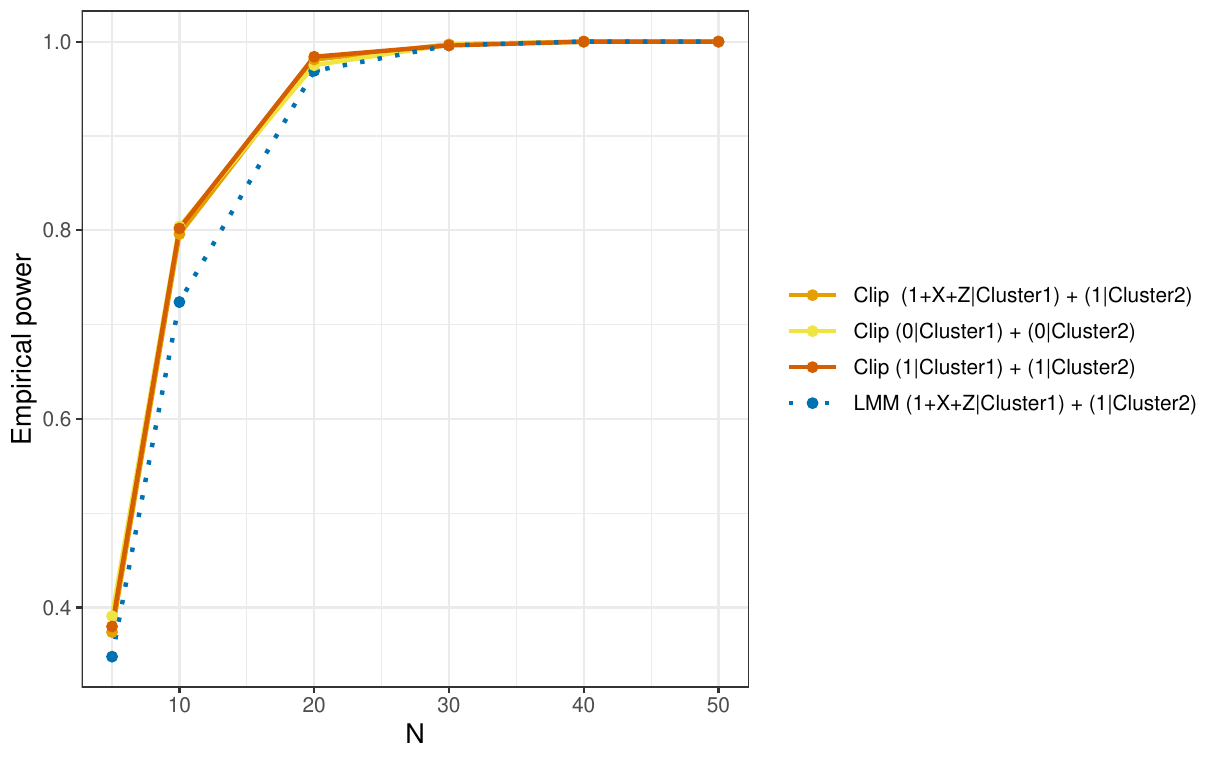}
\end{figure}

\section{Real data analysis}\label{sec:EEG}

We analyze EEG data from the ERP CORE (Event-Related Potential Compendium of Open Resources and Experiments) project \citep{kappenman2021erp}. The experiment comprises $40$ participants, where neural activity is recorded from $28$ scalp electrodes during a face-perception task. On each trial, an image from one of four conditions (faces, cars, scrambled faces, scrambled cars) was presented for $300$~ms (inter-stimulus interval: $1100$-$1300$~ms), and the ERP was recorded. For further details on the experimental paradigm and dataset, see \cite{kappenman2021erp,rossion2011erp}. The data are publicly available at \url{https://osf.io/thsqg/wiki/home/}.

The goal is to test the effects of a $2\times 2$ within-subjects design, i.e., category (faces versus cars) and scramble (scrambled versus intact) and their interaction, on ERP amplitudes, explicitly accounting for within-subject and between-electrode variability. On average, each participant contributed approximately $270$ observations across conditions. We apply the proposed \emph{clip} procedure, and compare it with two LMMs having different random structures and the LM HC3 approach. In all cases, fixed effects were specified for the $2 \times 2$ factorial design (category, scramble, and their interaction). We control the FWER using the Holm-Bonferroni procedure \citep{holm1979simple} for the LMMs and LM HC3 inferences, and the max-$T$ approach for the \emph{clip} method. %We remark here that, unlike the LMM, the \emph{clip} approach does not require specifying the random-effects structure but only the clustering variable (in this case, subjects), making it robust to random-effect structure misspecification.

Figure~\ref{fig:eeg} displays the adjusted $p$-values for the category factor (the results for the other effects are placed in Appendix \ref{appB}), with non-significant electrodes at $\alpha = 0.05$ shown in grey. Both the fully specified and the simpler LMM identify significant electrodes, but the resulting patterns differ substantially. This instability is in line with our simulation study, where we found that LMM-based inference can be highly sensitive to the specification of the random-effects structure, with a direct impact on type~I error. The LM HC3 approach yields an intermediate pattern of discoveries; however, the simulation study showed that type~I error control is not guaranteed in finite samples, so these findings should be interpreted with caution.

By contrast, the \emph{clip} procedure does not require direct specification of the random-effects structure and is therefore less sensitive to this source of misspecification. However, when the working covariance is constructed from an LMM, its quality depends on the stability of the underlying mixed model fit. In other words, if the random-effects covariance is poorly estimated because the corresponding LMM is unstable or exhibits convergence warnings, the resulting working covariance may be inaccurate, and the \emph{clip} procedure may lose efficiency. This mechanism is apparent in Figure~\ref{fig:eeg}: the more complex working covariance specification (top-right panel) leads to fewer discoveries than the simpler alternatives (top-left and top-middle panels). A plausible explanation is that, in this application, the richer covariance structure is poorly estimated and therefore introduces noise rather than useful information. This interpretation is supported by the fact that the corresponding full LMM exhibits convergence problems in 10 of the 28 electrodes, whereas the simpler specifications do not show such issues. Thus, the comparison highlights not only the sensitivity of LMM-based inference to modeling assumptions, but also the dependence of the efficiency of the proposed procedure on the quality of the estimated working covariance.

We remark here that the choice of the working covariance must be made before inspecting the inferential results, in order to avoid data-driven selection and cherry-picking. In practice, if the working covariance is based on LMM fits, we recommend fitting a small set of plausible mixed models and selecting, a priori, the specification with the most stable behavior and the fewest warnings. When no adequately stable covariance estimate is available, using the identity matrix is often preferable.

\begin{figure}[h]
    \centering
    \includegraphics[width=1.1\textwidth]{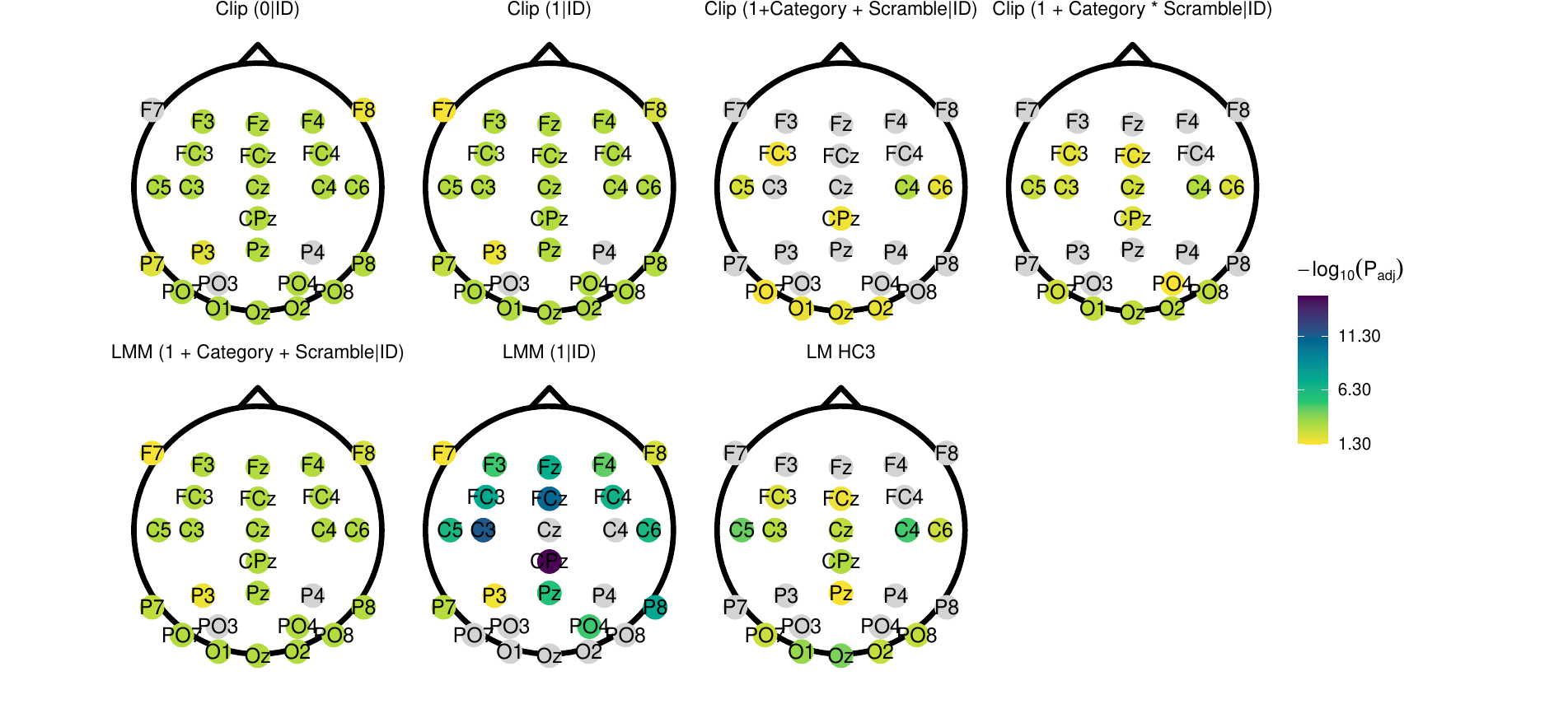}
    \caption{Adjusted $p$-values for the category effect for each electrode, estimated using the \emph{clip} approach (top row) and the LMM/LM HC3 methods (bottom row). The corresponding working covariance matrix and random-effects structure are indicated in parentheses.}
    \label{fig:eeg}
\end{figure}

\section{Discussion}\label{sec:discussion}

Model misspecification is well known to compromise statistical inference. In particular, incorrect specification of the error covariance structure may lead to invalid inference, often in the form of inflated type I error rates. This problem is especially acute for non-independent observations, where valid procedures must properly account for within-cluster correlation. Although LMMs are widely used for this purpose, they remain vulnerable to misspecification: in practice, alternative random-effects structures or variance-correlation specifications can lead to different conclusions, thereby increasing researchers’ degrees of freedom and creating opportunities for model cherry-picking.

Robust (sandwich) covariance estimators alleviate some of these issues, but their performance deteriorates in small samples and under common conditions such as heteroscedasticity. As an alternative, we proposed the clusterwise score test, which accommodates a broad range of random-effects structures and is robust to variance misspecification. We provide theoretical guarantees and simulation evidence showing that the method achieves proper type I error control across a variety of scenarios while remaining competitive with LMMs in terms of power. A key advantage is that the procedure does not require specification of the full error covariance structure; it only requires the blocking structure of the data to be identified (e.g., subjects, items).

Extending the method to multiple outcomes raises additional challenges, as multivariate mixed or marginal models commonly used in the literature are often difficult to estimate and highly sensitive to assumptions on the cross-outcome covariance structure. In our framework, dependence among outcomes is handled naturally through synchronized clusterwise sign-flipping transformations and an appropriate combining function. This yields asymptotic FWER control and can improve power relative to multiplicity-adjusted LMMs without imposing an ad hoc cross-outcome covariance model. These gains are especially evident when the number of outcomes is large and the outcomes are strongly correlated.

From a resampling perspective, the proposed procedure is closely related to wild bootstrap methods for clustered data. Rather than resampling residuals and refitting the model, we flip the signs of clusterwise score contributions, yielding a computationally efficient procedure that extends naturally to multivariate outcomes through synchronized flips and combining functions.

The main limitation of the proposed method is that it currently applies only to linear models, that is, settings with an identity link between the response and the covariates. Extending the methodology to generalized linear models or nonlinear link functions is an important direction for future research.

\section{Competing interests}
No competing interest is declared.

\section{Author contributions statement}
A.A: Conceptualization, Formal Analysis, Methodology, Software, Visualization, Writing – Original Draft Preparation. L.F: Conceptualization, Methodology, Software, Writing - Review \& Editing, Supervision

\section{Acknowledgments}
We thank Dr. Antonio Maffei for sharing the preprocessed EEG data used in Section \ref{sec:EEG}.

\setcounter{theorem}{0}
\appendix
\section{Proofs of the results}\label{appA}

\begin{proof}[Proof of Theorem~\ref{thm:block}]
Let $|L|=1$ and write $l$ for the unique element of $L$. Let $\sigma_l^2$ denote the diagonal entry of $\Sigma_L$ corresponding to outcome $l$, and define
$\zeta_j:=\zeta_{jl}^{\boldsymbol{\gamma}_l^\star}$, $j=1,\dots, N$.
We use the following fact \citep[Lemma~1]{hemerik2020robust}: if
$
(T_{n1},\dots,T_{nB})
\Rightarrow
(T_1,\dots,T_B),
$
where $T_1,\dots,T_B$ are i.i.d.\ continuous random variables, then
$
\Pr\!\left\{
T_{n1}>T_{n,(\lceil(1-\alpha)B\rceil)}
\right\}
\to
\lfloor \alpha B\rfloor/B.
$

For fixed $B$, let $s_j^1\equiv 1$, and let $s_j^2,\dots,s_j^B$ be i.i.d.\ Rademacher variables, independent across $j$ and independent of the data. For each $j$, define
$
\boldsymbol{\xi}_{j,n}
:=
n^{-1/2}
\bigl(
s_j^1\zeta_j,\dots,s_j^B\zeta_j
\bigr)^\top
\in\mathbb{R}^B.
$
Then
$
\sum_{j=1}^N \boldsymbol{\xi}_{j,n}
=
\bigl(
S_l^{\boldsymbol{\gamma}_l^\star}(F_1),\dots,S_l^{\boldsymbol{\gamma}_l^\star}(F_B)
\bigr)^\top.
$

The vectors $\boldsymbol{\xi}_{j,n}$ are independent across $j$, and
$
\sum_{j=1}^N \mathrm{Var}(\boldsymbol{\xi}_{j,n})
=
\frac{1}{n}\sum_{j=1}^N \mathrm{Var}(\zeta_j)\,I_B
\to
\sigma_l^2 I_B,
$
because $\mathbb{E}(s_j^b s_j^{b'})=0$ for $b\neq b'$ and
$\mathbb{E}\{(s_j^b)^2\}=1$.

Since $|L|=1$, Assumption~\ref{ass3} gives
\begin{equation*}
    \frac{1}{n^{d/2}}\sum_{j=1}^N \mathbb{E}|\zeta_j|^d
=
\frac{1}{n^{d/2}}\sum_{j=1}^N
\mathbb{E}\!\left[
\bigl\|\boldsymbol{\zeta}_j^{\boldsymbol{\gamma}^\star}\bigr\|^d
\right]
\to 0.
\end{equation*}
Hence
\begin{equation*}
    \sum_{j=1}^N \mathbb{E}\|\boldsymbol{\xi}_{j,n}\|^d
=
\frac{B^{d/2}}{n^{d/2}}\sum_{j=1}^N \mathbb{E}|\zeta_j|^d
\to 0.
\end{equation*}
By the multivariate Lyapunov central limit theorem,
$
\bigl(
S_l^{\boldsymbol{\gamma}_l^\star}(F_1),\dots,S_l^{\boldsymbol{\gamma}_l^\star}(F_B)
\bigr)
\Rightarrow
N_B(0,\sigma_l^2 I_B).
$
Thus the limit coordinates are i.i.d.\ $N(0,\sigma_l^2)$ and therefore continuous.

Since $B$ is fixed, Assumption~\ref{ass6} implies that, for each $l\in L$,
\begin{equation}\label{eq:plugin}
    \max_{1\le b\le B}
\left|
S_l^{\hat{\boldsymbol{\gamma}}_l}(F_b)
-
S_l^{\boldsymbol{\gamma}_l^\star}(F_b)
\right|
\le
n^{-1/2}\sum_{j=1}^N
\left|
\zeta_{jl}^{\hat{\boldsymbol{\gamma}}_l}
-
\zeta_{jl}^{\boldsymbol{\gamma}_l^\star}
\right|
=
o_p(1).
\end{equation}

Finally, Slutsky's theorem yields
$
\bigl(
S_l^{\hat{\boldsymbol{\gamma}}_l}(F_1),\dots,S_l^{\hat{\boldsymbol{\gamma}}_l}(F_B)
\bigr)
\Rightarrow
N_B(0,\sigma_l^2 I_B).
$
Applying \citet[Lemma~1]{hemerik2020robust} with
$T_{nb}=S_l^{\hat{\boldsymbol{\gamma}}_l}(F_b)$ proves the claim.
\end{proof}

\begin{proof}[Proof of Lemma~\ref{lemma:M}]
For each cluster $j$, let
$
\boldsymbol{\zeta}_j
:=
\boldsymbol{\zeta}_j^{\boldsymbol{\gamma}^\star}
=
\bigl(\zeta_{jl}^{\boldsymbol{\gamma}_l^\star}:l\in L\bigr)^\top
\in\mathbb{R}^{|L|}.
$
For fixed $B$, let $s_j^1\equiv 1$, and let
$s_j^2,\dots,s_j^B$ be i.i.d. Rademacher variables, independent across $j$
and independent of the data. Define
\begin{equation*}
    \boldsymbol{\Xi}_{j,n}
:=
n^{-1/2}
\begin{pmatrix}
s_j^1\boldsymbol{\zeta}_j\\
\vdots\\
s_j^B\boldsymbol{\zeta}_j
\end{pmatrix}
\in\mathbb{R}^{B|L|}.
\end{equation*}
Then
$
\sum_{j=1}^N \boldsymbol{\Xi}_{j,n}
=
\mathrm{vec}(M_n^\star),
$
where $M_n^\star$ is the $B\times |L|$ matrix with entries
$
[M_n^\star]_{bl}
=
S_l^{\boldsymbol{\gamma}_l^\star}(F_b),
\qquad
b=1,\dots,B,\ \ l\in L.
$

The vectors $\boldsymbol{\Xi}_{j,n}$ are independent across $j$. The $(b,b')$ block of
$\sum_{j=1}^N \mathrm{Var}(\boldsymbol{\Xi}_{j,n})$ is
$
\frac{1}{n}\sum_{j=1}^N
\mathbb{E}(s_j^b s_j^{b'})\,
\mathrm{Var}(\boldsymbol{\zeta}_j).
$
If $b=b'$, this converges to $\Sigma_L$ by Assumption~\ref{ass4}; if
$b\neq b'$, it converges to $0$ because
$\mathbb{E}(s_j^b s_j^{b'})=0$. Therefore,
$
\sum_{j=1}^N \mathrm{Var}(\boldsymbol{\Xi}_{j,n})
\to
I_B\otimes \Sigma_L.
$

Moreover, Assumption~\ref{ass3} implies
\begin{equation*}
    \sum_{j=1}^N \mathbb{E}\|\boldsymbol{\Xi}_{j,n}\|^d
=
\frac{B^{d/2}}{n^{d/2}}
\sum_{j=1}^N
\mathbb{E}\!\left[
\|\boldsymbol{\zeta}_j\|^d
\right]
\to 0.
\end{equation*}
Hence, by the multivariate Lyapunov central limit theorem,
$
\mathrm{vec}(M_n^\star)
\Rightarrow
N_{B|L|}(0,\,I_B\otimes\Sigma_L).
$
Equivalently, $M_n^\star$ converges in distribution to a random matrix $M$
whose rows are i.i.d.\ centered multivariate normal vectors with covariance
matrix $\Sigma_L$.

Finally, by the plug-in bound established in \eqref{eq:plugin},
\begin{equation*}
    \max_{1\le b\le B}\max_{l\in L}
\left|
S_l^{\hat{\boldsymbol{\gamma}}_l}(F_b)
-
S_l^{\boldsymbol{\gamma}_l^\star}(F_b)
\right|
=o_p(1).
\end{equation*}
Since $B$ and $|L|$ are fixed,
$
\mathrm{vec}(M_n)-\mathrm{vec}(M_n^\star)=o_p(1),
$
and Slutsky's theorem implies that $M_n$ has the same limiting distribution
as $M_n^\star$.
\end{proof}

\begin{proof}[Proof of Theorem~\ref{thm:local}]
By Lemma~\ref{lemma:M},
$
\bigl(
(S_l^{\hat{\boldsymbol{\gamma}}_l}(F_1):l\in L),\dots,
(S_l^{\hat{\boldsymbol{\gamma}}_l}(F_B):l\in L)
\bigr)
\Rightarrow
(\boldsymbol{Z}_1,\dots,\boldsymbol{Z}_B),
$
where $\boldsymbol{Z}_1,\dots,\boldsymbol{Z}_B$ are i.i.d.\ centered multivariate normal vectors with covariance matrix $\Sigma_L$.

Let
$
\boldsymbol{Z}_{n,b}
:=
\bigl(S_l^{\hat{\boldsymbol{\gamma}}_l}(F_b):l\in L\bigr),
\qquad
b=1,\dots, B.
$
Since $\psi$ is continuous, the continuous mapping theorem yields
$
\bigl(
T^{\hat{\boldsymbol{\gamma}}}(F_1),\dots,T^{\hat{\boldsymbol{\gamma}}}(F_B)
\bigr)
=
\bigl(
\psi(\boldsymbol{Z}_{n,1}),\dots,\psi(\boldsymbol{Z}_{n,B})
\bigr)
\Rightarrow
\bigl(
\psi(\boldsymbol{Z}_1),\dots,\psi(\boldsymbol{Z}_B)
\bigr).
$
Because $\boldsymbol{Z}_1,\dots,\boldsymbol{Z}_B$ are i.i.d., the limit variables
$\psi(\boldsymbol{Z}_1),\dots,\psi(\boldsymbol{Z}_B)$ are i.i.d. Applying \citet[Lemma~1]{hemerik2020robust} with
$T_{nb}=T^{\hat{\boldsymbol{\gamma}}}(F_b)$ gives
$
\Pr\!\left\{
T^{\hat{\boldsymbol{\gamma}}}(F_1)>
T_{(\lceil(1-\alpha)B\rceil)}^{\hat{\boldsymbol{\gamma}}}
\right\}
\to
\lfloor \alpha B\rfloor/B \le \alpha,
$
which proves the claim.
\end{proof}

\begin{proof}[Proof of Lemma~\ref{lem:studentized}]
For each $l\in L$, the assumption
$\hat{\sigma}_l^2\xrightarrow{p}\sigma_l^2,
\qquad
\sigma_l^2\in(0,\infty)$, implies
$\hat{\sigma}_l^{-1}\xrightarrow{p}\sigma_l^{-1}$. Hence, for each $l\in L$,
$
\widetilde S_l^{\hat{\boldsymbol{\gamma}}_l}(F_b)
-
\sigma_l^{-1}S_l^{\hat{\boldsymbol{\gamma}}_l}(F_b)
=
\bigl(\hat{\sigma}_l^{-1}-\sigma_l^{-1}\bigr)
S_l^{\hat{\boldsymbol{\gamma}}_l}(F_b).
$

Since $B$ is fixed and, by Lemma~\ref{lemma:M},
$
\bigl(
S_l^{\hat{\boldsymbol{\gamma}}_l}(F_b):b=1,\dots,B,\ l\in L
\bigr)
$
converges jointly in distribution, we have
$
\max_{1\le b\le B}
\bigl|S_l^{\hat{\boldsymbol{\gamma}}_l}(F_b)\bigr|
=
O_p(1).
$
Therefore,
$
\max_{1\le b\le B}
\left|
\widetilde S_l^{\hat{\boldsymbol{\gamma}}_l}(F_b)
-
\sigma_l^{-1}S_l^{\hat{\boldsymbol{\gamma}}_l}(F_b)
\right|
=
\bigl|\hat{\sigma}_l^{-1}-\sigma_l^{-1}\bigr|
\max_{1\le b\le B}
\bigl|S_l^{\hat{\boldsymbol{\gamma}}_l}(F_b)\bigr|
=
o_p(1).
$
Thus, uniformly in $b\in\{1,\dots,B\}$,
$
\widetilde S_l^{\hat{\boldsymbol{\gamma}}_l}(F_b)
=
\sigma_l^{-1}S_l^{\hat{\boldsymbol{\gamma}}_l}(F_b)
+
o_p(1).
$

Let
$
D_\sigma
=
\mathrm{diag}(\sigma_l:l\in L).
$
Then, uniformly in $b\in\{1,\dots,B\}$,
$
\bigl(
\widetilde S_l^{\hat{\boldsymbol{\gamma}}_l}(F_b):l\in L
\bigr)
=
D_\sigma^{-1}
\bigl(
S_l^{\hat{\boldsymbol{\gamma}}_l}(F_b):l\in L
\bigr)
+
o_p(1).
$

By Lemma~\ref{lemma:M}, the rows
$
\bigl(
S_l^{\hat{\boldsymbol{\gamma}}_l}(F_b):l\in L
\bigr)
$
are asymptotically i.i.d.\ centered multivariate normal vectors with covariance
matrix $\Sigma_L$. Therefore, by Slutsky's theorem, the studentized rows are
asymptotically i.i.d.\ centered multivariate normal vectors with covariance matrix
$
D_\sigma^{-1}\Sigma_L D_\sigma^{-1}.
$
Hence, the studentized statistics satisfy the same asymptotic
row-exchangeability property as in Lemma~\ref{lemma:M}, and the arguments of
Theorems~\ref{thm:block} and~\ref{thm:local} apply unchanged after replacing
$S_l^{\hat{\boldsymbol{\gamma}}_l}(F_b)$ by
$\widetilde S_l^{\hat{\boldsymbol{\gamma}}_l}(F_b)$.
\end{proof}

\section{Real data analysis}\label{appB}

Figure~\ref{fig:eeg2} displays the adjusted $p$-values for the \emph{scramble} factor, in the same format as Figure~\ref{fig:eeg}. The interaction effect is not shown because no significant evidence was found across electrodes after multiplicity adjustment. 

\begin{figure}[h]
    \centering
    \includegraphics[width=1.1\textwidth]{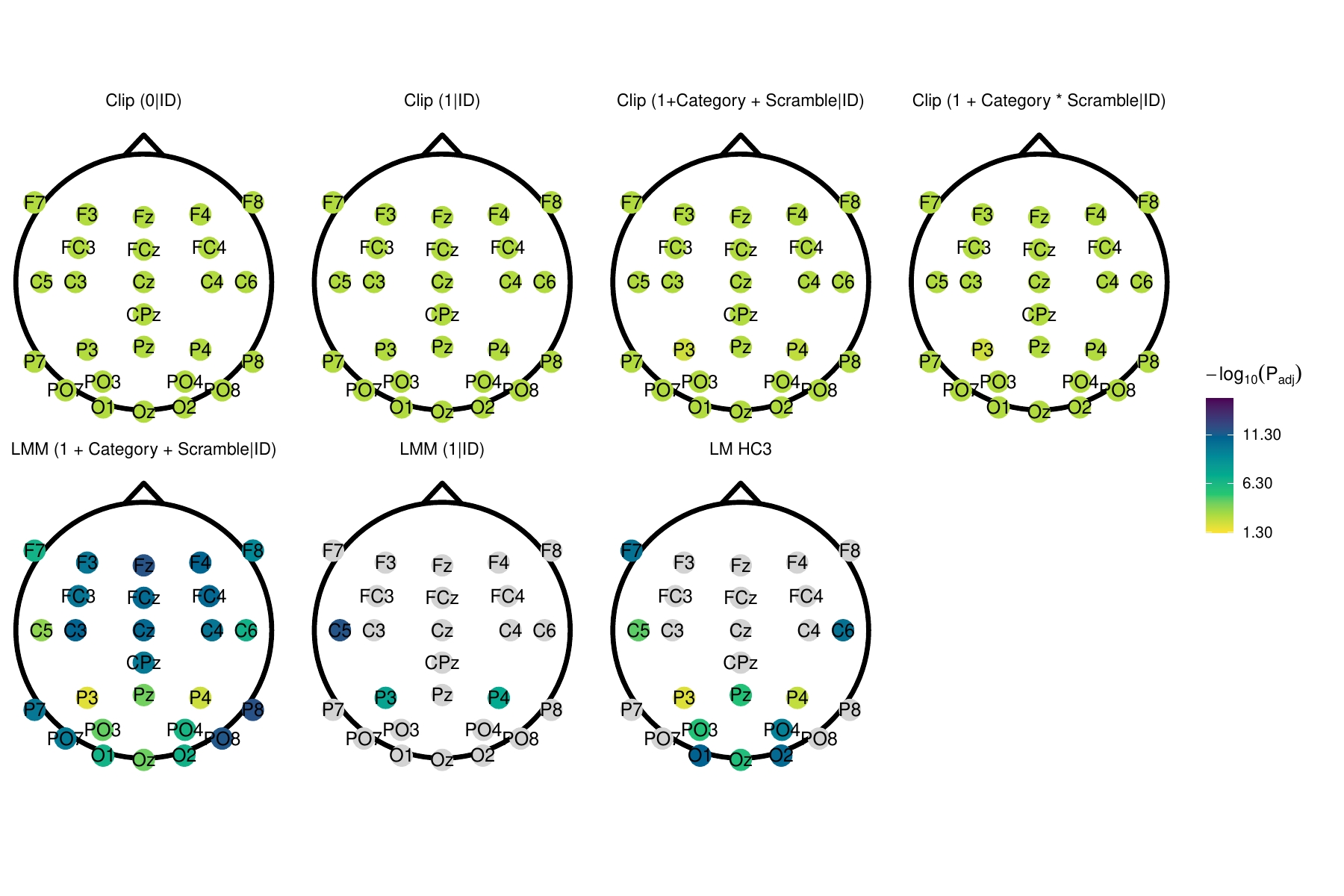}
    \caption{Adjusted $p$-values for the \emph{scramble} effect for each electrode, estimated using the \emph{clip} approach (top row) and the LMM/LM HC3 methods (bottom row). The corresponding working covariance matrix and random-effects structure are indicated in parentheses.}
    \label{fig:eeg2}
\end{figure}

\bibliographystyle{abbrvnat}
\bibliography{bibliography}

\end{document}